\newtheorem{thm}{Theorem}
\theoremstyle{definition}
\newtheorem{defn}[thm]{Definition}
\newtheorem{assum}[thm]{Assumption}
\newtheorem{prop}[thm]{Proposition}
\newtheorem{lem}[thm]{Lemma}
\newtheorem{coroll}[thm]{Corollary}
\theoremstyle{remark}
\newtheorem{rem}[thm]{Remark}
\newcommand{\trasp}{\ensuremath{^{\intercal}}}
\newcommand{\bfp}{\ensuremath{\pi}}
\newcommand{\bfx}{\ensuremath{\mathbf{x}}}
\newcommand{\bfu}{\ensuremath{\mathbf{u}}}
\newcommand{\bfw}{\ensuremath{\mathbf{w}}}
\newcommand{\bfn}{\ensuremath{\mathbf{n}}}
\newcommand{\bfq}{\ensuremath{\mathbf{q}}}
\newcommand{\bfA}{\ensuremath{\mathbf{A}}}
\newcommand{\bfB}{\ensuremath{\mathbf{B}}}
\newcommand{\bfQ}{\ensuremath{\mathbf{Q}}}
\newcommand{\bfR}{\ensuremath{\mathbf{R}}}
\newcommand{\setS}{\ensuremath{\mathcal{S}}}
\newcommand{\setT}{\ensuremath{\mathcal{T}}}
\newcommand{\setP}{\ensuremath{\mathcal{P}}}
\newcommand{\coalstr}{\ensuremath{\mathscr{P}}}
\newcommand{\setX}{\ensuremath{\mathcal{X}}}
\newcommand{\setE}{\ensuremath{\mathcal{E}}}
\newcommand{\setU}{\ensuremath{\mathcal{U}}}
\newcommand{\setV}{\ensuremath{\Psi}}
\newcommand{\coal}{\ensuremath{\mathcal{C}}}
\newcommand{\setN}{\ensuremath{\mathcal{N}}}
\newcommand{\setM}{\ensuremath{\mathcal{M}}}
\newcommand{\setO}{\ensuremath{\mathcal{O}}}
\newcommand{\mmg}{\ensuremath{\mathcal{G}}}
\newcommand{\globlyap}{\ensuremath{\mathscr{V}}}
\newcommand{\unione}{\scalebox{0.9}{$\cup$}}
\newcommand{\unionped}{\scalebox{0.6}{$\cup$}}
\newcommand{\vmerger}{\ensuremath{v(\setP_1\cup\setP_2)}}
\newcommand{\notimplies}{%
  \mathrel{{\ooalign{\hidewidth$\not\phantom{=}$\hidewidth\cr$\implies$}}}}
\newlength{\altfigur}
\newcounter{subeqn} %
\begin{document}
	\title{Coalitional control for self-organizing agents}
	\author{Filiberto Fele$^{1}$, Ezequiel Debada$^{2}$, Jos\'e M.~Maestre$^{3}$,~\IEEEmembership{Member,~IEEE} and Eduardo F.~Camacho$^{3}$,~\IEEEmembership{Fellow,~IEEE}
		\thanks{The final version of this manuscript can be found in~\cite{8253905}.%
		}%
		\thanks{Financial support by the FP7-ICT project DYMASOS (ref. 611281), Spanish MINECO project DPI2016-78338-R and Junta de Andaluc\'ia project P11-TEP-8129, is gratefully acknowledged.}
		\thanks{$^1$ F. Fele is with the Department of Engineering Science, University of Oxford, OX1 3PJ, UK
			{\tt\small filiberto.fele@eng.ox.ac.uk}}%
		\thanks{$^2$ E. Debada is with the React Group, EPFL, Lausanne, Switzerland
			}%
		\thanks{$^3$ The authors are with Departamento de Ingenier\'{i}a de Sistemas y Autom\'{a}tica, ETSI Universidad de Sevilla, 41092 Seville, Spain
			}%
	}
\maketitle
\begin{abstract}
Coalitional control is concerned with the management of multi-agent systems where cooperation cannot be taken for granted (due to, e.g., market competition, logistics). 
This paper proposes a model predictive control (MPC) framework aimed at large-scale dynamically-coupled systems whose individual components, possessing a limited model of the system, are controlled independently, pursuing possibly competing objectives. 
The emergence of cooperating clusters of controllers is contemplated through an autonomous negotiation protocol, based on the characterization as a coalitional game of the benefit derived by a broader feedback and the alignment of the individual objectives.
Specific mechanisms for the cooperative benefit redistribution that relax the cognitive requirements of the game are employed to compensate for possible local cost increases due to cooperation. As a result, the structure of the overall MPC feedback can be adapted online to the degree of interaction between different parts of the system, while satisfying the individual interests of the agents.
A wide-area control application for the power grid with the objective of minimizing frequency deviations and undesired inter-area power transfers is used as study case.
\end{abstract}

\section{Introduction}
Major challenges in control are in dealing with the increasing heterogeneity of networked systems---possibly characterized by decentralized management,  autonomy of the parts and dynamic structural reconfiguration capabilities~\cite{EngellEtAl2015}.
In such setting, selfish interests may assume a dominant role, significantly constraining the management of the system and their performance. This issue is especially evident in public infrastructures, often co-owned by independent entities, and whose management requires a tradeoff among sectors in direct competition~\cite{NEGENBORN10BOOK,MalandrinoEtAl2015b}.\par
Several works in the distributed control literature have studied the performance and stability issues for different modalities of participation of the control agents in the achievement of the global objective~\cite{RawlingsStewart2008,SCA09JPC}. As the systems become more complex and articulated, control architectures featuring flexible cooperation patterns have been recently proposed. For example, the notion of \emph{cooperating sets} of controllers appears in~\cite{TroddenRichards2009, TroddenRichards2013}. Within these sets, individual strategies are optimized considering what others may be able to achieve, thus indirectly promoting cooperation. The composition of the sets is updated according to a graph representing the active coupling constraints. In~\cite{NUNEZ13ICICAS} the hierarchy of the agents is adapted to different operational conditions by rearranging the order followed in the optimization of the control actions. The work of~\cite{FeleJPC} investigates the design of a hierarchical model predictive control (MPC) scheme for interconnected systems, where the sparsity pattern of the overall MPC feedback is dynamically adjusted to optimize the data link usage.\par
Methods for the analysis of the relevance of the agents and the communication paths involved in the distributed control of complex interconnected systems have been recently studied by~\cite{MaestreIshii2016_ACC,SummersLygeros2014,LiuEtAl2011}. The structural information provided by these methods allows the efficient allocation of the control resources, promoting sparsity in order to minimize computational and communicational requirements~\cite{WuJovanovic2017,GusrialdiHirche2010CDC}. A step further is the \emph{online} identification of the optimal control structure: besides accommodating the controller requirements in real-time~\cite{SchuhLunze2015ECC}, such flexibility grants the possibility of reconfiguring the system for improving robustness or fault-tolerance~\cite{GrossJilgStursberg2013_ECC}, or even for featuring plug-and-play capabilities~\cite{RiversoEtAl2013}.\par
The majority of the mentioned works addresses the achievement of a unique goal common to all the agents. Here we consider instead control agents that focus on some local (economically valuable) objectives. In such a scenario, a natural solution for steering individual interests towards the global welfare is the employ of incentive mechanisms from game theory. Several proposals are available so far for traffic or demand reshaping~\cite{DePaolaEtAl2017TranSmartGrid,Grammatico2017Aggreg,PfrommerEtAl2014}, and in competing markets like electric vehicles (EVs) recharge~\cite{GentileEtAl2017NashWardrop,MalandrinoEtAl2015a, YuanEtAl2016}.\par
When \emph{active} cooperation is a possibility, then \emph{individual rationality} of the agents needs to be taken into account, as the cooperation will be strictly associated with the expected share of the collective benefit. In this context, game theoretic tools for the redistribution of the value of cooperation are fundamental. 
A decentralized algorithm for benefit redistribution among cooperating agents is proposed in~\cite{NedicBauso2013TAC}. The bargaining protocol is run on a time-varying communication graph, and the resulting allocation is proven to converge to a stable one, that is, satisfying all players. The work of~\cite{ValenciaThesis2012} provides a cooperative MPC formulation where cooperation is subject to bargaining. The satisfaction of a minimum individual performance is imposed by a \emph{disagreement point}, defined as the threshold of maximum allowed loss of performance in case of cooperation. A cooperative distributed MPC scheme prioritizing local objectives is presented in~\cite{LopesDeLimaEtAl2015}: following situational altruism criteria, local objectives are dynamically adjusted to fulfill minimum local cost requirements. In~\cite{LianEtAl2017_JSAC}, the impact of sparsity constraints on the LQR global feedback law is analyzed from a communication cost point of view. In particular, as the sparsity constraint is relaxed to enhance system performance, the reallocation of the communication costs over the cooperating agents is studied.
In~\cite{RamosEtAl2013}, self-organizing coalitions among EVs are considered as a means to enhance the predictability of the vehicle-to-grid offer, by presenting a wider energetic portfolio to the grid operator. 
Analogously, the work of~\cite{BaeyensEtAl2013} studies the formation of coalitions among wind energy producers with the objective of reducing the output variability in the aggregate offer, and so improve their expected profit. The authors of~\cite{MalandrinoEtAl2015b} investigate how the equilibrium can be reached in an EV recharging market whose actors are (coalitions of) charging stations and EV users.\par
The coalitional distributed MPC architecture described in this paper is based on analogous game-theoretical grounds. In particular, we consider coalitions as a means for control agents to reduce the effect of the externalities represented by the (otherwise unknown) dynamical coupling imposed on one another.
Even if there are clear incentives---from a cooperative distributed control standpoint---for all agents to come together in the interests of minimizing such externalities, we consider here possible inefficient situations, arising from structural limitations or informational constraints, that may lead to the formation of intermediate coalitions~\cite{ray2007game}.
More specifically, we consider that a set of global MPC control laws is associated with the possible cooperation structures of the control agents, and propose a framework allowing to study the resulting global switching behaviour.
The global cooperation structure emerges as the outcome of the \emph{autonomous} coalition formation between the agents, through a pairwise bargaining procedure where costs of cooperation are taken into account.
The main element of the bargaining is the \emph{online} redistribution of the value of a coalition. 
In particular, it is shown that convergence to a stable allocation of the coalitional benefit can be obtained without imposing a heavy cognitive demand on the agents, thus maintaining compatibility with the restricted communication characterizing the considered scenario. 
This is achieved on the basis of an iterative mechanism guaranteeing coalition-wise stability, provided that the \emph{core} of the associated transferable-utility (TU) game is nonempty~\cite{Stearns1968,Cesco98aconvergent,SANDHOLM1999}. The contributions of the paper include the formalization of design conditions concerning closed-loop stability and nonemptiness of the core. The analysis shows how, when global model information is locally unavailable, cooperation costs play a major role on the outcome of the coalition formation, and that these can be used as a mechanism to link coalition formation with desired closed-loop properties.
 Finally, the effectiveness of the proposed coalitional control framework is demonstrated on a wide-area control (WAC) application in power grids, with the objective of minimizing frequency deviations and undesired inter-area power transfers.\par
The document is organized as follows: Section~\ref{sec_probstat} introduces the model of the system and of the communication infrastructure; the controller and the ingredients employed for coalition formation are formalized in Section~\ref{sec_coal_contr}; the utility transfer scheme and the conditions for nonemptiness of the core are discussed in Section~\ref{sec_coalstab}; Section~\ref{sec_bargaining} presents the algorithms for coalition formation/splitting and the derivation of individual cost allocation. Section~\ref{sec_example} illustrates numerical results on a power grid application.\par
\emph{Notation}: All vectors are intended as column vectors, unless differently specified. Given a set $\setN = \{1,\ldots,n\}$, $(x_i)_{i\in\setN}$ denotes the column vector $(x_1, x_2, \ldots, x_n)$ obtained by stacking all (column) vectors $x_i$, for all $i\in\setN$. State and input vectors relative to coalitions are notated in bold: thus $\bfx_i$ is the state vector of coalition $i$, whereas $x_j$ denotes the state of subsystem $j$.
 $x(t|k)$ denotes the value of $x(k+t)$ estimated at time $k$. $\mathscr{L}$ is the set of functions $\varphi: \mathbb{N}\rightarrow [0,\infty)$, such that $\varphi(\cdot)$ is decreasing and $\lim_{t\rightarrow\infty}\varphi(t)=0$.
\section{Problem statement}
\label{sec_probstat}
\subsection{System description}
\label{sec_sysmodel}
Consider a system that can be described as a collection $\setN = \{1,\ldots,|\setN|\}$ of coupled linear processes, each governed by a local control agent, and modeled by the following discrete-time state-space equations:
\begin{subequations}\label{eq_ind_ss}
\begin{align}
x_i(k+1) &= A_{ii} x_i(k)+ B_{ii}u_i(k) + w_i(k),\label{eq_ss_a}\\
w_i(k) &= \sum_{j\in \setM_i} A_{ij}x_j(k) + B_{ij}u_j(k),\label{eq_ind_disturb}
\end{align}
\end{subequations}
where $x_i\in\mathbb{R}^{n_i}$ and $u_{i}\in\mathbb{R}^{q_i}$ are respectively the state and local control input vectors of subsystem $i\in\setN$, constrained in the sets $\setX_i$ and $\setU_i$ respectively.\footnote{Without loss of generality and for notational convenience, we assume in the remainder that $n_i=n_j$ and $q_i=q_j$ for all $i,j\in\setN$.}
Matrices $A_{ii},B_{ii}$ are properly sized state-transition matrices relative to the local states and inputs. Similarly, $A_{ij},B_{ij}$ are the matrices describing the coupling $w_i\in\mathbb{R}^{n_i}$ with states and inputs of \emph{neighbor} subsystems. The neighborhood set is defined as
$\setM_i = \left\{j\in \setN\setminus\{i\}:\, A_{ij} \neq \mathbf{0} \vee B_{ij} \neq \mathbf{0}\right\}$.
Models analogous to~\eqref{eq_ind_ss} have been employed for the control of large-scale systems such as drinking water networks composed of
interconnected water tanks~\cite{Ocampo2012HierarcMPC}, irrigation canals~\cite{Schuurmans1997,FeleJPC,Negenborn:08c}, supply chains~\cite{JMM09CDCb,JMM11JPC}, traffic networks~\cite{CamponogaraOliveira2009} and power grids~\cite{VenkatEtAl2006CDC}.\par
Denoting the global state as $x = (x_i)_{i\in\setN}\in\mathbb{R}^n$ and the global input as $u = (u_i)_{i\in\setN}\in\mathbb{R}^q$, the state evolution of the whole system of systems is governed by the following equation
\begin{equation}
x(k+1) = A x(k) + B u(k),
\label{eq_sos_ss}
\end{equation}
where $A=[A_{ij}]_{i,j\in\setN}\in\mathbb{R}^{n\times n}$ and $B=[B_{ij}]_{i,j\in\setN}\in\mathbb{R}^{n\times q}$, $n=\sum_{i\in\setN}n_i$, $q=\sum_{i\in\setN}q_i$. We designate the global system constraints as $\setX = \prod_{i\in\setN}\setX_i$ and $\setU = \prod_{i\in\setN}\setU_i$.\par
\subsection{Exchange of information}
\label{sec_infoexchange}
Control agents can communicate through a network infrastructure schematized by the undirected graph $\mmg(k) = (\setN,\setE(k))$, where $\setE(\cdot)\subseteq \setN \times \setN$. We consider a time variant set of links $\setE(\cdot)$ reflecting the possibility of establishing/disrupting communication links at some given time steps. 
In particular, let $\mathcal{T}_{\coal}\subseteq\mathbb{N}$. For any two consecutive elements $k',k''\in\mathcal{T}_{\coal}$, with $k'<k''$, we have $\setE(k)=\setE(k')$ for all $k\in\{k',\ldots,k''-1\}$. 
Any communication link $(i,j)\in\setE(k)$ defines the mutual availability of state (and input) feedback information between agents $i,j\in\setN$. 
Thus, the graph $\mmg(k)$ delineates a partition $\coalstr(\mmg(k)) = \{\coal_1,\ldots,\coal_{n_c}\}$ of the set of controllers into $n_c\in\left[1,|\setN|\right]$ connected components, referred to as (non-overlapping) coalitions, such that $\coal_i \subseteq \setN,\, \coal_i\cap\coal_j = \varnothing, \, \forall i,j\in\{1,\ldots,n_c\},i\neq j, \, \text{and } \bigcup_{i=1}^{n_c} \coal_i =\setN$~\cite{RahwanEtAl2012}.\par
In other words, $\coalstr(\mmg(k))$ reflects the (sparse) global control feedback structure. 
 Each coalition $\coal_r\in\coalstr(\mmg(k))$ can be considered as a unique system, where the dynamics~\eqref{eq_ss_a} of all subsystems involved are aggregated as
\begin{equation}
\bfx_r(k+1) = \bfA_{rr} \bfx_r(k) + \bfB_{rr} \bfu_r(k) + \bfw_r(k), 
\label{eq_coal_ss}
\end{equation}
where $\bfx_r = (x_i)_{i\in\coal_r}\in\mathbb{R}^{\bfn_i}$ is the aggregate state vector, and $\bfA_{rr} = [A_{ij}]_{i,j\in\coal_r}$ the relative state transition matrix, describing the state coupling between members of the same coalition. The components of $\bfu_r = (u_i)_{i\in\coal_r}\in\mathbb{R}^{\bfq_i}$ are the local control inputs of the subsystems in $\coal_r$, and $\bfB_{rr} = [B_{ij}]_{i,j\in\coal_r}$ is the associated coalitional input matrix.
Finally, the vector $\bfw_r=(w_i^{(r)})_{i\in\coal_r}$ gathers the disturbance due to the coupling with subsystems external to the coalition. For each $i\in\coal_r$ we have
\begin{equation}
w_i^{(r)} = \sum_{j\in\setM_i\setminus\coal_r} A_{ij}x_j(k)+ B_{ij}u_j(k),
\label{eq_disturb_coal2}
\end{equation}
and $w_i^{(r)} = \mathbf{0}$ if $\setM_i\setminus\coal_r = \varnothing$. Note that the definition of $w_i^{(r)}$ is equivalent to~\eqref{eq_ind_disturb} except the sum is restricted to $\setM_i\setminus\coal_r$. Thus, from the coalition standpoint, the modeling uncertainty comes from subsystems $j\in\left(\bigcup_{i\in\coal_r}\setM_i\right)\setminus\coal_r$.

\section{Coalitional control}
\label{sec_coal_contr}
Cooperation between local control agents translates into better performances~\cite{RawlingsStewart2008}. This comes however at the expense of higher communication and computation requirements~\cite{MaestreEtAl2015OCAM}.
Indeed, the effort required for the coordination increases with the number of agents involved in a coalition. 
Costs incurred for cooperation can be taken into account by means of ad-hoc indices related to, e.g., the size of the coalition, the distance between its members~\cite{FeleEtAl2016CSM}, the number of data links needed to establish communication between them~\cite{MAESTRE13OCAM,FeleJPC}. 
The design of a networked controller architecture can be formulated as a trade-off between control performance and savings on the coordination costs~\cite{LianEtAl2017_JSAC,WuJovanovic2017,JilgStur2013_IFAC,ALESSIO2011}.\par
In this paper we propose a game theoretical framework for the dynamic establishment of cooperation in the control of a multi-agent system. 
The presence of an omniscient supervisor is not assumed here: the cooperation between any two parties is established \emph{autonomously}, as the outcome of a pairwise bargaining between the coalitions in $\coalstr(\mmg(\cdot))$. The object of the bargaining is the \emph{reallocation} of the benefit derived from coordination.
Thus, the overall cooperation structure dynamically evolves following a trade-off between increased performance and costs incurred for cooperation.\par
From now on, the parties involved in a bargaining over the formation of a (bigger) coalition will be designated as \emph{players} 1 and 2;
for notational convenience, the index `$1\unione 2$' will refer to their merger. Note that the term player may refer to either a single control agent or a group of agents that, as a consequence of their participation in the same coalition, act as a single entity. Formally, these agents are identified by the sets $\setP_1,\setP_2\in\coalstr(\mmg(k))$.
Before defining the criterion for the coalition formation bargaining, we discuss the performance improvement offered by cooperative control---viewed as coalitional benefit---and highlight the issues of the absence of benefit redistribution from the (economic) standpoint of the individual agents.
\subsection{Control objective}
\label{sec_controbj}
We consider control agents $j\in\setN$ implementing an optimal control policy aimed at minimizing a local (quadratic) stage cost $\ell_j: \mathbb{R}^{n_j}\times \mathbb{R}^{q_j}\rightarrow \mathbb{R}$, over an horizon $N_p$. In particular, we assume that this optimal control policy is derived through a model predictive control (MPC) approach~\cite{EFC1}. We will refer to $\ell_j(x_j,u_j)$ as the \emph{selfish} objective. It is worth to point out that the selfish objective is implicitly a function of other systems' states, through the coupling in~\eqref{eq_ind_ss}. The impact of this coupling on the local cost is uncertain unless cooperation is introduced.
 Within each $\coal_i\in\coalstr$ the coalitional stage cost is defined as
 $\Lambda_i(\bfx_i,\bfu_i) : \mathbb{R}^{\bfn_i}\times \mathbb{R}^{\bfq_i}\rightarrow \mathbb{R}$, with $\bfn_i = \sum_{j\in\coal_i}{n_j}$, $\bfq_i = \sum_{j\in\coal_i}{q_j}$.
Built upon the selfish objectives, the coalitional objective allows to improve on them by exploiting the shared feedback information available at coalition level and explicitly include the coupling variables in its formulation.
Following an MPC approach, at time $k$ a control input for $\coal_i$ is derived from the solution of the optimization problem~\cite{EFC1,RawlingsLIB09} 
\begin{subequations}
\label{eq_coal_MPC}
\begin{align}
\bfu_i^{\ast} = \arg\min_{\bfu_i} & = \sum_{t=0}^{N_p-1} \Lambda_i(\bfx_i(t|k),\bfu_i(t|k)) + V_i^{\mathrm{f}}(\bfx(N_p|k)) \label{eq_coal_MPC_cost} \\
\mathrm{s.t.}& \nonumber\\
\bfx_i(t+1|k) & = \bfA_{ii}\bfx_i(t|k) + \bfB_{ii}\bfu_i(t|k),\label{eq_mod_MPC}\\
\bfx_i(t|k) &\in \prod_{j\in\coal_i}\setX_j,\,t = 0,\ldots,N_p-1,\label{eq_restrx_MPC}\\
\bfu_i(t|k)& \in \prod_{j\in\coal_i}\setU_j ,\,t = 0,\ldots,N_p-1,\label{eq_restru_MPC}\\
\bfx_i(N_p|k) &\in \Omega_i,\label{eq_restrfin_MPC}\\
\bfx_i(0|k) &= \bfx_i(k),\label{eq_init_constr}
\end{align}
\end{subequations}
where~\eqref{eq_mod_MPC} is the prediction model for the evaluation of the cost function~\eqref{eq_coal_MPC_cost} over the horizon of length $N_p$; 
the second term in~\eqref{eq_coal_MPC_cost}, $V_i^{\mathrm{f}}(N_p|k)$, denotes the terminal cost. 
$\Omega_i\subseteq \prod_{j\in\coal_i}\setX_j$ is a terminal set constraint~\cite{MayneEtAl2000Stability}.
The first element of $\bfu_i^{\ast} \triangleq (\bfu_i^{\ast}(0|k),\, \ldots \, \bfu_i^{\ast}(N_p-1|k))$ is applied at time $k$ to the subsystems involved in the coalition, i.e., $\bfu_i(k)\triangleq \bfu_i^{\ast}(0|k) = (u_j^{\ast}(0|k))_{j\in\coal_i}$, and~\eqref{eq_coal_MPC} is solved again at subsequent time instants in a receding horizon fashion.\par
Problem~\eqref{eq_coal_MPC} is solved \emph{independently} by each coalition $\coal_i\in\mathscr{P}(\setN,\mmg(k))$. The computation of~\eqref{eq_coal_MPC} can be performed by a \emph{coalition leader}, or distributed across the members of the coalition. Several algorithms are available for the distributed solution of convex MPC problems, see, e.g.,~\cite{MAE14DME}.\par
In case of singleton coalition, i.e., $\coal_i = \{i\}$, $i\in\setN$,~\eqref{eq_coal_MPC} corresponds to the selfish optimization control problem.
When all the agents are pursuing their own selfish objective through a local state feeedback, a decentralized \emph{noncooperative} feedback law emerges globally. In contrast, when the \emph{grand coalition}  is formed, the solution of~\eqref{eq_coal_MPC} coincides with a centralized MPC feedback law. 
In all other cases, a \emph{semi-cooperative} global feedback law is implemented.
\begin{rem}
Although in absence of cooperation costs the centralized MPC feedback law results in the~\emph{social} optimum, in the setting considered here the grand coalition is not necessarily the most efficient cooperation structure.
\end{rem}
\begin{rem}
Reflecting the state feedback structure imposed by $\coalstr(\mmg(\cdot))$, $\bfw_i$ is absent in the prediction model. Although the performance of the MPC control law might be enhanced by including an estimated external coupling term in~\eqref{eq_mod_MPC}, its derivation is in general application-oriented and out of the scope of this paper.
\end{rem}
\begin{assum}[Weak coupling]
\label{assum_weak}
All subsystems are input-to-state stable (ISS) when controlled with the MPC feedback law $\kappa_{i}: \prod_{j\in\coal_i}\setX_j \rightarrow \prod_{j\in\coal_i}\setU_j$ derived from~\eqref{eq_coal_MPC}, treating~\eqref{eq_ind_disturb} as an unknown disturbance. Moreover, the small-gain condition for the interconnected systems holds for the global control law $\kappa_{\coalstr}: \mathcal{X} \rightarrow \mathcal{U}$, $\kappa_{\coalstr}\triangleq (\kappa_{i})_{i\in\coalstr}$ associated with each possible $\coalstr(\mmg(\cdot))$, and~\eqref{eq_sos_ss} is ISS~\cite{GeiselhartLazarWirth2015}.
\end{assum}
\begin{assum}[Dwell time]
\label{assum_dwelltime}
There exist a set $\setT_{\coal}\subseteq\mathbb{N}$ defining the switching instants,
and $\bar{\tau}_D > 0$ such that for every two consecutive elements $k',k''\in\mathcal{T}_{\coal}$ it holds $k''-k'\geq \bar{\tau}_D$, and for which the system~\eqref{eq_sos_ss} in closed loop with the set of switched control laws $\{\kappa_{\coalstr}: \mathbb{X} \rightarrow \mathbb{U}\}$, associated with each possible $\coalstr(\mmg(\cdot))$, is ISS~\cite{VuEtAl2007ISSdwelltime,HespanhaMorse1999dwelltime}.
\end{assum}
\begin{rem}
Several characteristics of model predictive control make it the ideal choice in this setting, e.g., clear definition of the performance objectives, direct consideration of input and state constraints. Nonetheless, the essential feature that facilitates the coalitional framework is the receding-horizon evaluation of the controller performance---intrinsic to MPC control. This will be clear by the next section.
\end{rem}

\subsection{Evaluation of coalitional benefit}
\label{sec_COO}
In the following we will employ an index expressing the control performance and cooperation costs associated to a given coalition.
\begin{defn}[Coalition value]
We define the \emph{value} of coalition $\coal_i\subseteq\setN$ as the function $v: 2^{\setN}\rightarrow \mathbb{R}$,
\begin{equation}
v(\coal_{i}) = \sum\limits_{t=0}^{N_p-1} \Lambda_{i}(\bfx_i(t|k),\bfu_i(t|k)) + \chi_{i}(\mmg_i(k)),
\label{eq_index_COO}
\end{equation}
where the cooperation cost $\chi_i(\cdot): (\setN,\setE)\rightarrow [0,\infty)$ depends on the subgraph describing the connections between the members of $\coal_i$. We assume here that $\chi_i(\cdot)$ is monotone increasing in the number of nodes in the graph.
\end{defn}
Given a pair of players $\setP_1,\setP_2\in\coalstr$,~\eqref{eq_index_COO} is evaluated for the two players separately (\emph{unilateral} strategies) and for their merger (\emph{coalitional} strategy). 
\begin{rem}
The evaluation of~\eqref{eq_index_COO} requires the \emph{mutual interaction} model between (the subsystems in) $\setP_1$ and $\setP_2$ in the solution of~\eqref{eq_coal_MPC}. This model is assumed available during the bargaining process.
\end{rem}
\subsubsection{Evaluation of the merger}
$\Lambda_{i}(\cdot,\cdot)$ is evaluated with the input sequences $\bfu_{1\unionped 2}^{\ast}$ and the associated predicted state trajectories $\bfx_{1\unionped 2}^{\ast}$ obtained as the solution of~\eqref{eq_coal_MPC} relative to the coalition $\setP_{1}\cup\setP_{2}$. 
$\chi_{1\unionped 2}(\mmg_{1\unionped 2}(k))$ is evaluated over the subgraph describing all the connections $(i,j)\in\setE_{1\unionped 2}(k)$ between every pair of agents $i,j\in(\setP_1 \cup \setP_2)$.
We refer to the jointly optimized input sequence $\bfu_{1\unionped 2}^{\ast}$ as \emph{coalitional} strategy.
\subsubsection{Evaluation of unilateral strategies}
If the players are dynamically coupled, their optimal trajectories will be interdependent. Therefore, 
a consistent evaluation of unilateral strategies can only be performed if some knowledge about the input and state sequences applied by the other player is available. 
Unilateral strategies can be derived over an iterative procedure, as follows:
\textit{(i)} set $\tilde{\bfu}_{j} \triangleq {\bfu}_{j}^{(l-1),\ast}$, i.e., the optimal control sequence computed at iteration $l-1$ by player $j\neq i$, and $\tilde{\bfx}_{j} \triangleq {\bfx}_{j}^{(l-1),\ast}$, its associated state trajectory;
\textit{(ii)} solve~\eqref{eq_coal_MPC} for $i,j=\{1,2\}$, $j\neq i$, replacing~\eqref{eq_mod_MPC} with 
\begin{multline}\label{eq_mpc_unilateral}
\bfx_{i}^{(l)}(t+1|k)  = \bfA_{ii} \bfx_{i}^{(l)}(t|k) + \bfB_{ii} \bfu_{i}^{(l)}(t|k)\\
	+ \bfA_{ij} \tilde{\bfx}_{j}(t|k) + \bfB_{ij} \tilde{\bfu}_{j}(t|k). 
\end{multline}
The tails of the optimal sequences computed at time $k-1$ can be used as initial trajectories, i.e., ${\bfu}_{j}^{(0),\ast}(t|k) \triangleq \bfu_{j}^{\ast}(t+1|k-1)$, provided they are feasible.
Finally, $\chi_{i}(\mmg_i(k))$ is evaluated over the subgraph describing all the connections $(j,j')\in\setE_i(k)$ between every pair of agents $j,j'\in\setP_i$.
\subsection{Individual rationality}
\label{sec_CIR}
The premise here is that agents are rational: they accept to cooperate only if the redistribution of the coalition benefit constitutes an improvement upon the outcome of the unilateral strategy. A necessary condition for this is that the benefit outperforms the aggregate outcome of unilateral strategies, i.e., 
\begin{equation}
v(\setP_1\cup\setP_2) \leq v(\setP_1)+ v(\setP_2),
\label{eq_COO_OK}
\end{equation}
with $v(\cdot)$ defined in~\eqref{eq_index_COO}.
Let the cost incurred by player $i\in\{1,2\}$ under the coalitional strategy $\bfu_{1\unionped 2}^{\ast}$ be
\begin{multline}\label{eq_portion_player}
\vmerger|_{(i)} \triangleq \sum\limits_{t=0}^{N_p-1} \Lambda_{i}(\bfx^{\ast}_{1\unionped 2}(t|k),\bfu^{\ast}_{1\unionped 2}(t|k))\\ + \chi_{1\unionped 2}(\mmg_{1\unionped 2}(k))|_{(i)},
\end{multline}
where, with an abuse of notation, $\Lambda_{i}(\bfx^{\ast}_{1\unionped 2}(\cdot),\bfu^{\ast}_{1\unionped 2}(\cdot))$ means that the influence of the coupling of player $j\in\{1,2\}\setminus\{i\}$ on the cost of player $i$ is taken into account for the computation of $\Lambda_i$.
 Note that $\vmerger|_{(1)} + \vmerger|_{(2)} = \vmerger$, and the value of $\chi_{1\unionped 2}(\mmg_{1\unionped 2}(k))|_{(j)}$ is a proper (predefined) allocation of the cooperation costs. 
We are now ready to formally state the leading thread of this work. It can be easily verified by example, so we present it without proof.
\begin{prop}\label{prop_ind_ration}
Condition~\eqref{eq_COO_OK} does not imply lower incurred costs to both players, i.e.,
\begin{multline}
\vmerger \leq v(\setP_{1}) + v(\setP_{2})\\ \notimplies \vmerger|_{(j)} \leq v(\setP_{j}),\; \forall j\in\{1,2\},
\label{eq_ind_ration1}
\end{multline}
where $v(\cdot)$ is defined in~\eqref{eq_index_COO}, and $v(\cdot)|_{(j)}$ is defined in~\eqref{eq_portion_player}.
\end{prop}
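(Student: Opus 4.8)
The plan is to establish Proposition~\ref{prop_ind_ration} by exhibiting a counterexample, which suffices since the claim is a non-implication. The underlying reason the implication fails is structural: problem~\eqref{eq_coal_MPC} solved for the merger $\setP_1\cup\setP_2$ minimizes the \emph{aggregate} coalitional cost $\sum_{t}\Lambda_{1\unionped 2}(\cdot,\cdot)=\sum_t\big(\Lambda_1(\cdot)+\Lambda_2(\cdot)\big)$ together with a predefined split of $\chi_{1\unionped 2}$, and carries no constraint tying the outcome to the \emph{per-player} performances $v(\setP_1)$, $v(\setP_2)$. Hence it is enough to produce problem data for which the resulting split $\big(\vmerger|_{(1)},\vmerger|_{(2)}\big)$ obeys $\vmerger|_{(1)}+\vmerger|_{(2)}\le v(\setP_1)+v(\setP_2)$ while $\vmerger|_{(1)}> v(\setP_1)$; equivalently, the real numbers must satisfy $a+b\le c+d$ with $a>c$, which is clearly consistent (take, e.g., $c=d$, $b$ much smaller than $c$, and $a$ slightly above $c$).

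To turn this into a genuine instance of the control problem, I would take two scalar, unconstrained singleton players $\setP_1=\{1\}$, $\setP_2=\{2\}$ with horizon $N_p=1$, stage costs $\ell_j(x_j,u_j)=\rho_j x_j^2+u_j^2$, and asymmetrically coupled dynamics $x_j(k+1)=a_{jj}x_j+b_{jj}u_j+a_{j\bar{j}}x_{\bar{j}}$ (with $|a_{12}|$ large, $|a_{21}|$ small, say) that are weakly self-coupled so that Assumption~\ref{assum_weak} holds. When $1$ and $2$ merge, the prediction model~\eqref{eq_mod_MPC} for the coalition contains the internal coupling $a_{12}$, so the joint optimizer can push $u_1$ away from its selfish value to shrink the term $\rho_2\,(a_{22}x_2+b_{22}u_2+a_{21}x_1)^2$ borne by player~$2$; with $\rho_2$ large and $\rho_1$ moderate this trade is worthwhile for the aggregate, yielding $\vmerger<v(\setP_1)+v(\setP_2)$, yet player~$1$ both spends more control effort and inherits a worse terminal state, so that $\vmerger|_{(1)}>v(\setP_1)$. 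The unilateral baselines $v(\setP_j)$ are obtained from the iteration of Section~\ref{sec_COO} (for $N_p=1$ it terminates at once, each player solving~\eqref{eq_coal_MPC} against the other's predicted trajectory via~\eqref{eq_mpc_unilateral}), and allocating all of the small extra cooperation cost $\chi_{1\unionped 2}|_{(1)}$ to player~$1$ only reinforces the inequality.

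The only point requiring care is bookkeeping consistency: one must check that the chosen unilateral trajectories are feasible and that the computed triple $\big(v(\setP_1),v(\setP_2),\vmerger\big)$, after the $\chi$-allocation is folded in, actually lands in the regime $a+b\le c+d<2c$ with $a>c$. Since we are free to pick the instance, this reduces to selecting the gains and weights $\{a_{jj},a_{j\bar{j}},b_{jj},\rho_j\}$, solving two one-dimensional quadratic programs in closed form, and taking $\chi$ small enough not to overturn $\vmerger<v(\setP_1)+v(\setP_2)$; no genuine obstacle arises. I therefore expect the ``hard part'' to be purely expository --- keeping the example small enough to display while making the asymmetry transparent --- rather than mathematical, which is consistent with the paper's remark that the statement can be verified by example.
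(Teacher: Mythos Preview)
Your approach is correct and in fact goes further than the paper: the authors do not supply a proof at all, stating only that the proposition ``can be easily verified by example, so we present it without proof.'' A counterexample is exactly what is required, and your structural explanation (the merger optimization \eqref{eq_coal_MPC} minimizes the aggregate cost with no per-player constraint, so nothing prevents the split $(\vmerger|_{(1)},\vmerger|_{(2)})$ from landing with one coordinate above its unilateral baseline) is the right intuition.

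One small technical wrinkle in the concrete instance you sketch: with $N_p=1$ and purely state-coupled dynamics $x_j(k+1)=a_{jj}x_j+b_{jj}u_j+a_{j\bar j}x_{\bar j}$, the input $u_1(0|k)$ enters neither player~2's stage cost nor player~2's one-step-ahead state $x_2(1|k)=a_{22}x_2(0)+b_{22}u_2(0)+a_{21}x_1(0)$, since $x_1(0)$ is fixed by \eqref{eq_init_constr}. So the mechanism ``push $u_1$ away from its selfish value to shrink $\rho_2(\cdot)^2$'' does not actually operate in that instance. The fix is trivial: either include input coupling $B_{21}\neq 0$ (so $u_1$ appears in $x_2(1|k)$), or take $N_p\ge 2$ (so $u_1(0|k)$ shapes $x_1(1|k)$ which then feeds $x_2(2|k)$ through $a_{21}$). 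With either adjustment your argument goes through unchanged, and the bookkeeping you flag as the only residual work is indeed routine.
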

We refer to the RHS in~\eqref{eq_ind_ration1} as the \emph{individual rationality} requirement.
In other words, the merger forms if and only if $\vmerger|_{(j)} \leq v(\setP_{j})$ is fulfilled for both players $j\in\{1,2\}$.\par
The same argument leading to~\eqref{eq_ind_ration1} can be extended to the individual members of any coalition. Even if cooperation allows to decrease the aggregate cost, it can indeed be unfavorable for some agents from the point of view of the locally incurred costs. Thus, there is not a straightforward relationship between cooperation and \emph{individual rationality}---unless some means of transferring the value between agents is provided. 
\subsection{Transferable utility}
\label{sec_TUalg}
\begin{assum}[Transferable utility (TU)]\label{assum_TU}
The coalitional performance~\eqref{eq_index_COO} is an economic index. A value equivalent to $v(\setP_1) + v(\setP_2) - \vmerger$, i.e., the \emph{surplus} of the merger, can be reallocated between the agents.
\end{assum}
Let $p_j^{(i)}\in\mathbb{R}$ designate the cost reallocated to agent $j\in\coal_i\subseteq\setN$, and $p^{(i)}=(p_j^{(i)})_{j\in\coal_i}$ the vector of allocations to the members of $\coal_i$. 
\begin{defn}[Efficiency]
An allocation is \emph{efficient} w.r.t. the coalition value $v(\coal_i)$ if $\sum_{j\in\coal_i}{p_j^{(i)}} = v(\coal_i)$.
\end{defn}
We refer to $\bfp_{i} \triangleq \sum_{j\in\coal_i}{p_j^{(i)}}$ as the \emph{aggregate} cost allocated over coalition $\coal_i$.
\begin{lem}
Let \eqref{eq_COO_OK} and Assumption~\ref{assum_TU} hold for a given tuple $\{\coal_1,\coal_2,\coal_1\cup\coal_2\}$, where $\coal_1,\coal_2\in\coalstr(\mmg(k))$. Then there exists an efficient allocation of the merger cost, i.e., vectors $p^{(1)}\in\mathbb{R}^{|\coal_1|}$ and $p^{(2)}\in\mathbb{R}^{|\coal_2|}$ such that
\begin{multline*}
\sum_{j\in\coal_i}{p_j^{(i)}} \leq v(\coal_i),\text{ for }i\in\{1,2\},\\ \text{ and } \sum_{j\in\coal_1}{p_j^{(1)}}+\sum_{j\in\coal_2}{p_j^{(2)}} = v(\coal_1\cup\coal_2).
\end{multline*}
\end{lem}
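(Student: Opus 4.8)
The plan is to reduce the claim to an elementary splitting of the cooperation surplus, exploiting the fact that the individual reallocated costs $p_j^{(i)}$ carry no sign constraints under Assumption~\ref{assum_TU}. First I would set $\sigma \triangleq v(\coal_1) + v(\coal_2) - v(\coal_1\cup\coal_2)$, which is nonnegative precisely by~\eqref{eq_COO_OK}; this is the only point at which~\eqref{eq_COO_OK} enters. Writing $\bfp_i = \sum_{j\in\coal_i} p_j^{(i)}$ for the aggregate allocation over $\coal_i$ and $\delta_i = v(\coal_i) - \bfp_i$, the two required inequalities read $\delta_1,\delta_2 \geq 0$, while the merger-efficiency equality $\bfp_1 + \bfp_2 = v(\coal_1\cup\coal_2)$ is equivalent to $\delta_1 + \delta_2 = \sigma$.

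Second, since $\sigma \geq 0$, any partition of $\sigma$ into two nonnegative parts does the job: I would take for concreteness $\delta_1 = \delta_2 = \sigma/2$ (equivalently, $\bfp_i = v(\coal_i) - \sigma/2$), though any $\delta_1 \in [0,\sigma]$ with $\delta_2 = \sigma - \delta_1$ is admissible. This fixes target aggregates $\bfp_1,\bfp_2$ with $\bfp_i \leq v(\coal_i)$ and $\bfp_1 + \bfp_2 = v(\coal_1\cup\coal_2)$.

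Third, I would pass from the aggregates back to the vectors $p^{(1)}\in\mathbb{R}^{|\coal_1|}$, $p^{(2)}\in\mathbb{R}^{|\coal_2|}$. Since no constraint other than the aggregate sum is imposed on the individual entries, one can always realize a prescribed aggregate, e.g.\ by the uniform choice $p_j^{(i)} = \bfp_i/|\coal_i|$ for $j\in\coal_i$; substituting these into the inequalities and the equality of step one completes the argument. There is no real obstacle here: the substantive remark is that the within-coalition distribution is free, which is exactly why the statement may be established at the aggregate level without loss of generality, and why the subsequent pairwise bargaining over the individual shares $p_j^{(i)}$ is well posed.
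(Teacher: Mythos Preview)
Your proof is correct and follows essentially the same route as the paper: the paper exhibits the egalitarian split $\bfp_i = v(\coal_i) - \tfrac{1}{2}\big(v(\coal_1)+v(\coal_2)-v(\coal_1\cup\coal_2)\big)$, which is precisely your choice $\delta_1=\delta_2=\sigma/2$, and remarks that this coincides with the two-player Shapley value. Your argument is slightly more explicit in two minor respects---you parametrize the full interval $[0,\sigma]$ of admissible splits and you spell out the passage from the aggregates $\bfp_i$ to actual vectors $p^{(i)}$ via the uniform choice $p_j^{(i)}=\bfp_i/|\coal_i|$---but the core idea is identical.
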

\begin{proof}
A straightforward solution is the \emph{egalitarian} redistribution, where an \emph{equal} share of the merger surplus is assigned to each player, i.e.,
\begin{equation}
\bfp_i = v(\coal_i) - \frac{1}{2}\left(v(\coal_i) + v(\coal_j) - v(\coal_i\cup \coal_j)\right), 
\label{eq_Shapley}
\end{equation}
for $i\in\{1,2\}$ and $j\neq i$. Geometrically, the allocation $(\bfp_1,\bfp_2)$ corresponds to the midpoint of the line segment connecting $(v(\coal_1),v(\coal_1\cup\coal_2)-v(\coal_1))$ and $(v(\coal_1\cup\coal_2)-v(\coal_2),v(\coal_2))$ (and also coincides with the Shapley value formula for a two-player game)~\cite{Ferguson_GTBook}.
\end{proof}
So far, the discussion has been carried out without explicitly dealing with the case $|\setP_i|>1$. 
In Section~\ref{sec_coalstab} we address the redistribution of the aggregate cost $\bfp_i$ allocated to coalition $\coal_i$ over each one of its members.
\subsection{Closed-loop performance}
In this section we discuss the closed-loop performance of the proposed coalitional MPC control scheme. More specifically, we address the deviation between the predicted and the closed-loop control cost as a consequence of the formation of a coalition.\par
Consider $\coal_i\in\coalstr(\mmg(k))$, and assume that $\coalstr(\mmg(k))$ does not change in the interval $[k,k+N_p-1]$. 
Let
\begin{equation}
\varpi_i(k) \triangleq  \sum_{t=k}^{k+N_p-1} \Lambda_i(\bfx_i(k),\bfu_i(k)) -  \sum_{t=0}^{N_p-1} \Lambda_i(\bfx_i(t|k),\bfu_i(t|k))
\label{eq_diffpredreal}
\end{equation}
measure the deviation  between the predicted cost and the cost actually incurred in closed loop, i.e., when all the agents apply the optimal trajectories $\bfu_i(k) = \bfu_i^{\ast}(0|k)$ over $t = \{k,\ldots,k+N_p-1\}$ (since coupling from external subsystems is neglected in~\eqref{eq_mod_MPC}, we expect $\varpi_i(k) \neq 0$). We have seen in the previous section that the formation of a coalition is associated with an expected decrease in the control cost---derived through the jointly optimized control law---that (at least) compensates for the increase in the coordination effort. The aim of this section is to define the conditions under which the \emph{global} control cost does not increase upon the formation of a coalition.
\begin{assum}\label{assum_predimprove}
For any tuple $\{\coal_1,\coal_2,\coal_1\cup\coal_2\}$, with $\coal_1,\coal_2\in\coalstr$, we assume that $|\varpi_{1\unionped 2}(\cdot)|\leq |\varpi_{1}(\cdot)| + |\varpi_{2}(\cdot)|$.
\end{assum}
The above assumption implies that the dynamical effect of the rest of agents $\setN\setminus(\coal_1\cup\coal_2)$ on the subset $\coal_1\cup\coal_2$ does not change irrespective of how the agents in $\coal_1\cup\coal_2$ organize themselves into coalitions. 
 Observe that as a consequence of~\eqref{eq_disturb_coal2} and Assumption~\ref{assum_weak}, Assumption~\ref{assum_predimprove} is generally mild.\par
The following result, concerning the stability of the closed loop, follows from Assumption~\ref{assum_predimprove} and the input-to-state stability of the interconnected system.
\begin{thm}\label{lem_lyap} 
Consider $\coalstr = \{\coal_1,\ldots,\coal_{n_c}\}$, $\coalstr^+ = \coalstr\setminus\{\coal_1,\coal_2\}\cup \{\coal_1\cup\coal_2\}$, and let $\globlyap$ and $\globlyap^+$ be the associated closed-loop global costs over the interval $[k,k+N_p-1]$. Let $\Lambda_i(\cdot,\cdot)$ be a convex function, and let Assumption~\ref{assum_predimprove} hold. Then there exists a cooperation cost function $\chi_{i}(\cdot): (\setN,\setE)\rightarrow [0,\infty)$ for which $\globlyap^+ \leq \globlyap$.
\end{thm}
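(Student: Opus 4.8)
The plan is to write both $\globlyap$ and $\globlyap^+$ as sums of per-coalition contributions, show that the coalitions outside $\coal_1\cup\coal_2$ contribute identically to the two, and then absorb the residual difference into the cooperation cost. Using~\eqref{eq_index_COO} and the definition~\eqref{eq_diffpredreal} of $\varpi_i(k)$, the cost incurred over $[k,k+N_p-1]$ by a coalition $\coal_i$ is $\sum_{t=0}^{N_p-1}\Lambda_i(\bfx_i(t|k),\bfu_i(t|k))+\chi_i(\mmg_i(k))+\varpi_i(k)=v(\coal_i)+\varpi_i(k)$, so I would take $\globlyap=\sum_{\coal_i\in\coalstr}\bigl(v(\coal_i)+\varpi_i(k)\bigr)$ and likewise for $\globlyap^+$ over $\coalstr^+$. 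For any $\coal_m\in\coalstr\setminus\{\coal_1,\coal_2\}$ the predicted term solved from~\eqref{eq_coal_MPC} depends only on $\bfx_m(k)$ and on the internal model of $\coal_m$, neither of which is affected by repartitioning $\coal_1\cup\coal_2$, and $\chi_m(\mmg_m(k))$ is untouched; the structural content of Assumption~\ref{assum_predimprove} --- that the dynamical interaction across the boundary of $\coal_1\cup\coal_2$ is invariant to its internal organization --- together with the ISS of~\eqref{eq_sos_ss} from Assumption~\ref{assum_weak} then makes the realized cost of $\coal_m$, hence $\varpi_m(k)$, invariant as well. This reduces the claim to
\[
\globlyap^+-\globlyap=\bigl(v(\coal_1\cup\coal_2)-v(\coal_1)-v(\coal_2)\bigr)+\bigl(\varpi_{1\unionped 2}(k)-\varpi_1(k)-\varpi_2(k)\bigr)\le 0 .
\]

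Next I would split the first bracket as $v(\coal_1\cup\coal_2)-v(\coal_1)-v(\coal_2)=-\Delta_\Lambda+\Delta_\chi$, where $\Delta_\chi\triangleq\chi_{1\unionped 2}(\mmg_{1\unionped 2}(k))-\chi_1(\mmg_1(k))-\chi_2(\mmg_2(k))$ and $\Delta_\Lambda$ is the aggregate predicted stage cost of the unilateral strategies of players $1,2$ minus that of their merger, using (as in~\eqref{eq_portion_player}) that the coalitional cost of the merger decomposes along the joint trajectory as $\Lambda_{1\unionped 2}=\Lambda_1+\Lambda_2$. The key sub-claim is $\Delta_\Lambda\ge0$: at a fixed point of the iteration of Section~\ref{sec_COO}, the converged unilateral pair satisfies~\eqref{eq_mpc_unilateral} with $\tilde{\bfx}_j,\tilde{\bfu}_j$ equal to the other player's own converged prediction, i.e.\ it is feasible for problem~\eqref{eq_coal_MPC} posed over $\coal_1\cup\coal_2$ (the coupling terms in~\eqref{eq_mpc_unilateral} are exactly the off-diagonal blocks of the merged transition matrices, while the box and terminal constraints compose); convexity of $\Lambda_i$ makes~\eqref{eq_coal_MPC} for $\coal_1\cup\coal_2$ a convex program whose attained minimum cannot exceed the cost at that feasible point, which equals the aggregate unilateral predicted cost --- hence $\Delta_\Lambda\ge0$.

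For the mismatch bracket I would use Assumption~\ref{assum_predimprove} to write $\varpi_{1\unionped 2}(k)-\varpi_1(k)-\varpi_2(k)\le|\varpi_{1\unionped 2}(k)|-\varpi_1(k)-\varpi_2(k)\le|\varpi_1(k)|+|\varpi_2(k)|-\varpi_1(k)-\varpi_2(k)=2\varpi_1^-(k)+2\varpi_2^-(k)$, with $x^-\triangleq\max\{-x,0\}$, and then invoke Assumption~\ref{assum_weak}: ISS keeps the closed-loop state --- hence the finite-horizon quadratic stage cost, hence $|\varpi_i(k)|$ --- uniformly bounded over all reachable states and all admissible $\coalstr(\mmg(\cdot))$ by some $\bar W<\infty$, so this bracket is at most $4\bar W$. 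Collecting the pieces, $\globlyap^+-\globlyap\le-\Delta_\Lambda+\Delta_\chi+4\bar W$. It then remains to exhibit $\chi_i(\cdot)$ with $\Delta_\chi\le-4\bar W$ on every triple $\{\coal_1,\coal_2,\coal_1\cup\coal_2\}$ arising from an admissible $\coalstr$; since $\setN$ is finite these triples are finite in number, so taking any node-monotone, strictly subadditive template $\chi_0$ (e.g.\ $\chi_0(\coal)=\log(1+|\coal|)$) and setting $\chi_i\triangleq c\,\chi_0$ with $c$ large enough makes $\Delta_\chi\le-4\bar W$ throughout, while $\chi_i(\cdot):(\setN,\setE)\to[0,\infty)$ stays monotone increasing in the number of nodes. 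With this $\chi$ one gets $\globlyap^+-\globlyap\le-\Delta_\Lambda\le0$.

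I expect the main obstacle to be the first step --- rigorously certifying that the coalitions outside $\coal_1\cup\coal_2$ contribute identically to $\globlyap$ and $\globlyap^+$, i.e.\ that the merger produces no net ``spillover'' on the rest of the network --- which is exactly what the structural reading of Assumption~\ref{assum_predimprove} combined with ISS is meant to supply. A secondary delicate point is the sub-claim $\Delta_\Lambda\ge0$, which leans on the unilateral iteration of Section~\ref{sec_COO} having reached a fixed point and on convexity of $\Lambda_i$ so that the merged MPC problem~\eqref{eq_coal_MPC} is a convex program with a well-defined global minimum.
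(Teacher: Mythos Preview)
There is a genuine gap at the spillover step. You claim that each $\coal_m$, $m\geq 3$, contributes identically to $\globlyap$ and $\globlyap^+$ (i.e., $\varpi_m'=\varpi_m$), invoking a ``structural reading'' of Assumption~\ref{assum_predimprove}. But that assumption only bounds $|\varpi_{1\unionped 2}|$ by $|\varpi_1|+|\varpi_2|$; the remark following it concerns the effect of $\setN\setminus(\coal_1\cup\coal_2)$ \emph{on} $\coal_1\cup\coal_2$, not the reverse. Merging changes the control law applied inside $\coal_1\cup\coal_2$, hence its closed-loop trajectory, hence the disturbance $\bfw_m$ each $\coal_m$ sees through~\eqref{eq_disturb_coal2}; so $\varpi_m'\neq\varpi_m$ in general. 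The paper does not cancel this term: it keeps $\sum_{r\geq 3}\varpi_r'$ explicit in the chain~\eqref{eq_proof11a}--\eqref{eq_proof11d} and absorbs the bounded residual $\sum_{r\geq 3}(\varpi_r'-\varpi_r)+2(|\varpi_1|+|\varpi_2|)$ into the design of $\chi$, using compactness of $\setX,\setU$.

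The second, related issue is that you never invoke the merger condition~\eqref{eq_COO_OK}, which is the paper's actual lever for bringing $\chi$ into the estimate. In the paper $\globlyap$ is the realized \emph{control} cost (no $\chi$ term); $\chi$ enters only because $\coalstr^+$ can succeed $\coalstr$ exclusively when~\eqref{eq_COO_OK} holds, which for a \emph{superadditive} $\chi$ forces $J_{1\unionped 2}\leq J_1+J_2-(\chi_{1\unionped 2}-\chi_1-\chi_2)$: the predicted improvement is already at least $\chi_{1\unionped 2}-\chi_1-\chi_2$, and choosing this gap large enough dominates every bounded mismatch term --- including the spillover you tried to set to zero. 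Your route instead folds $\chi$ into $\globlyap$ and ends up requiring $\Delta_\chi\leq -4\bar W$, i.e., a strictly \emph{subadditive} $\chi$, the opposite sign to the paper's construction (and to the example $\chi=c_{\mathrm{coal}}|\coal|^2$ used later). Under the paper's definition of $\globlyap$ and without the unjustified spillover cancellation, there is no $\Delta_\chi$ term left in $\globlyap^+-\globlyap$ to absorb the residual, so the argument does not close; the fix is precisely to use~\eqref{eq_COO_OK} and let a superadditive $\chi$ do the work.
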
 
\begin{proof}
Let $J_i = \sum_{t=0}^{N_p-1} \Lambda_i(\bfx_i^{\ast}(t|k),\bfu_i^{\ast}(t|k))$. Now notice that~\eqref{eq_COO_OK} must hold for $\coalstr^+$ to be a successor structure to $\coalstr$. Then, from~\eqref{eq_diffpredreal} we can write
\begin{subequations}
\begin{align} 
\globlyap^+ & \triangleq J_{1\unionped 2}  + \varpi_{1\unionped 2} + \sum_{r=3}^{n_c} (J_r + \varpi_r')\nonumber\\ 
&\qquad\qquad \leq J_{1\unionped 2}  + |\varpi_{1\unionped 2}| + \sum_{r=3}^{n_c} (J_r + \varpi_r') \\
& = v(\coal_1\cup\coal_2) - \chi_{1\unionped 2}(\coalstr^+) + |\varpi_{1\unionped 2}| + \sum_{r=3}^{n_c} (J_r + \varpi_r') \label{eq_proof11a}\\
& \leq v(\coal_1) + v(\coal_2) - \chi_{1\unionped 2}(\coalstr^+) + |\varpi_{1\unionped 2}| + \sum_{r=3}^{n_c} (J_r + \varpi_r') \label{eq_proof11b}\\
& \leq J_1 + J_2 + \chi_{1}(\coalstr) + \chi_{2}(\coalstr) - \chi_{1\unionped 2}(\coalstr^+) \nonumber\\
&\qquad\qquad\qquad + |\varpi_1| + |\varpi_2|
+ \sum_{r=3}^{n_c} (J_r  + \varpi_r') \label{eq_proof11c}\\ 
& \leq \globlyap + \chi_{1}(\coalstr) + \chi_{2}(\coalstr) - \chi_{1\unionped 2}(\coalstr^+)\nonumber\\
&\qquad\qquad +2(|\varpi_1| + |\varpi_2|) -  \sum_{r=3}^{n_c} \varpi_r + \sum_{r=3}^{n_c} \varpi_r',\label{eq_proof11d}
\end{align}
\end{subequations}
where the inequality~\eqref{eq_proof11a} follows from~\eqref{eq_index_COO}, \eqref{eq_proof11b} from~\eqref{eq_COO_OK}, \eqref{eq_proof11c} from Assumption~\ref{assum_predimprove}, \eqref{eq_proof11d} from the definition of $\globlyap$ and the upper bound to  
the worst case in which $\varpi_{1},\varpi_{2}<0$. Then
\begin{multline}
\globlyap^+ \leq \globlyap \Leftrightarrow \sum_{r=3}^{n_c} \varpi_r' \leq \sum_{r=3}^{n_c} \varpi_r \\
- 2(|\varpi_1| + |\varpi_2|) + \chi_{1\unionped 2}(\coalstr^+) - \chi_{1}(\coalstr) - \chi_{2}(\coalstr).
\label{eq_boundcostincr}
\end{multline}
The proof is concluded by observing that since $\setX$ and $\setU$ are compact, $\|\sum_{r=3}^{n_c} \varpi_r' - \sum_{r=3}^{n_c} \varpi_r + 2(|\varpi_1| + |\varpi_2|)\|$ is bounded, and the desired properties for $\chi_{i}(\cdot)$, i.e., monotone increasing in the coalition size, are fulfilled.
\end{proof}
\begin{rem}
In practice, the cases in which either $\varpi_{1}<0$ or $\varpi_{2}<0$ (or both) and $\varpi_{1\unionped 2}>\varpi_{1}+\varpi_{2}$ can be considered singular, in the sense that they are associated with mutual synergetic actions between the two players. In these cases, in presence of nonnegligible cooperation costs the players will likely be better off with unilateral strategies than with coalitional ones, and the bound in~\eqref{eq_boundcostincr} can be reduced to $\sum_{r=3}^{n_c} \varpi_r' \leq \sum_{r=3}^{n_c} \varpi_r + \chi_{1\unionped 2}(\coalstr^+) - \chi_{1}(\coalstr) + \chi_{2}(\coalstr)$.
\end{rem}

\section{Coalitional stability}
\label{sec_coalstab}
Consider again the vector $p^{(i)}=(p_j^{(i)})_{j\in\coal_i}$ of allocations to the members of $\coal_i$. We have seen so far that rational agents $j\in\setN$ will choose an allocation $p_j^{(a)}$ (associated to a coalition $\coal_a\subseteq\setN$) over $p_j^{(b)}$ (associated to another coalition $\coal_b\subseteq\setN$) if $p_j^{(a)} < p_j^{(b)}$.\par
Given a coalition $\coal\in\coalstr(\mmg(k))$, we seek allocations of $v(\coal)$ such that no agent $j\in\coal$ has incentive to leave the coalition. 
For this we consider the cooperative TU game $\Gamma_{\setN}(k)=\langle \setN,v \rangle$, and restrict our attention on the subgame $\Gamma_{\coal}(k)=\langle \coal,v \rangle$, where the \emph{characteristic} function $v(\cdot)$ is defined in~\eqref{eq_index_COO}.
Since the objective is to redistribute the entire cost associated to the members of $\coal$,  we start from the set of efficient vector allocations
\begin{equation}
\setV \triangleq \bigg\{p \in\mathbb{R}^{|\coal|} :\, \sum_{j\in\coal}p_j = v(\coal) \bigg\},
\label{eq_impdef}
\end{equation}
and define the \emph{excess} as the difference between the value the members of $\setS\subset\coal$ can achieve as standalone coalition, and the aggregate cost allocated over them by participating in $\coal$. 
\begin{defn}[Excess]
For any subcoalition $\setS\subseteq\coal$, the \emph{excess}  w.r.t. $p\in\setV$ is
\begin{equation*}
e(\setS,p) =  v(\setS) - \sum_{j\in\setS}{p_j},
\end{equation*}
and $e(\varnothing,p) = 0$.
\end{defn}
From~\eqref{eq_impdef} it follows $e(\coal,p) = 0$. This concept allows us to define the set of allocations for which no agent has an incentive to leave $\coal$ for joining a coalition $\setS\subset\coal$.
\begin{defn}[Core]
The \emph{core} of the TU game $\Gamma_{\coal}$ is the set
\begin{equation*}
\setO = \left\{p\in\setV :\, e(\setS,p)\leq 0,\,\forall \setS\subseteq\coal\right\}.
\end{equation*}
\end{defn}
It follows that all $p\in\setO$ fulfill individual rationality, i.e., $p_j\geq v(\{j\})$, for all $j\in\coal$, as well as group rationality, i.e., $\sum_{j\in\setS} p_j \geq v(\setS)$, for all $\setS\subseteq\coal$ (therefore $\setO\subseteq\setV$). This means that no $p\in\setO$ can be improved by a subcoalition $\setS\subset\coal$.
In contrast, for any $p\in\setV\setminus\setO$ there exists a set of players $\setS\subset\coal$ that can claim a better allocation through a \emph{demand} against $p$.
\begin{defn}[Demand]
A demand of a subcoalition $\setS\subset\coal$ against $p \in\setV$ is a pair $(\setS,\delta)$, where $\delta=e(\setS,p)$.
\end{defn}
It follows that a demand $(\setS,\delta)$ is satisfied by any allocation $p' = (p'_j)_{j\in\coal}$ such that $e(\setS,p')=0$.
This can be achieved by defining the new allocation $p'$ according to an egalitarian redistribution,
\begin{equation}\label{eq_satisfaction}
p'_j = \left\{
\begin{array}{ll}
	p_j - \frac{e(\setS,p)}{|\setS|}, & \text{if } j\in\setS,\\
	p_j + \frac{e(\setS,p)}{|\coal\setminus\setS|}, & \text{if } j\in\coal\setminus\setS.
\end{array}\right.
\end{equation}
Notice that $p'\in \setV$ since $\sum p'_j = \sum p_j = v(\coal)$.\par
Next, we provide the conditions under which there exists $p'\in\setV$ that can satisfy any demand $(\setS,\delta)$ from any $\setS\subset\coal$. In other words, the objective is to define some sufficient conditions for the nonemptiness of the core of a given subgame $\Gamma_{\coal}$. To do this, we will use the following assumption, whose implications are delineated in the next proposition.
\begin{assum}\label{assum_superadd}
Given a coalition $\coal\subseteq\setN$, there exists $\alpha\in\mathscr{L}$ such that $v(\coal) = \alpha(|\coal|) \sum_{i\in\coal}v(\{i\})$, and $\alpha(\cdot) \leq 1$.
\end{assum}
\begin{rem}\label{rem_superadd}
Note that assumption~\ref{assum_superadd} is an extension of~\eqref{eq_COO_OK} to $n\geq2$ players.
Let $J_{\setS}$ be the predicted cost for a coalition $\setS\in\coalstr(\mmg(\cdot))$ over the horizon $[k,k+N_p-1]$, as defined in Theorem~\ref{lem_lyap}, and $J_i\triangleq \sum_{t=0}^{N_p-1} \ell_i(x_i^{\ast}(t|k),u_i^{\ast}(t|k))$ the predicted selfish cost for agent $i\in\setN$ over the same horizon. Then, by~\eqref{eq_index_COO},~\eqref{eq_COO_OK}, and the convexity of the stage cost $\Lambda_i(\cdot,\cdot)$, Assumption~\ref{assum_superadd} holds for coalition $\setS\in\coalstr(\mmg(\cdot))$ if
\begin{equation*}
J_{\setS}\leq \alpha \sum_{i\in\setS} J_i - \Delta^{\chi}_{\setS}(\alpha)
\end{equation*}
holds for some $\alpha\leq 1$, where $\Delta^{\chi}_{\setS}(\alpha) \triangleq \chi_{\setS}-\alpha \sum_{i\in\setS}\chi_{i}>0$.
\end{rem}
\begin{lem}\label{lem_convex}
Let Assumption~\ref{assum_superadd} hold for all coalitions $\setS\subseteq\coal\in\coalstr$. Moreover, let $\alpha(\cdot)$ be such that
$\alpha(n)\leq 2\alpha(n-1) - \alpha(1)$. Then the core of the subgame $\Gamma_{\coal} = \langle\coal,v\rangle$ is nonempty.
\end{lem}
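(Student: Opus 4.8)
The plan is to exhibit an explicit point of the core $\setO$ and verify it directly, keeping the Bondareva--Shapley characterisation as a back-up argument. Write $m\triangleq|\coal|$ and $s(\setS)\triangleq\sum_{i\in\setS}v(\{i\})$, and recall that Assumption~\ref{assum_superadd}, imposed on every $\setS\subseteq\coal$ with a common $\alpha\in\mathscr{L}$, reads $v(\setS)=\alpha(|\setS|)\,s(\setS)$ with $\alpha$ nonincreasing, $0\le\alpha\le 1$, and $\alpha(1)=1$ (the latter from the singleton case, whenever some $v(\{i\})\neq 0$); moreover $v(\{i\})\ge 0$ for each $i$, being a nonnegative quadratic stage cost plus the nonnegative cooperation cost, so $s(\setS)\ge 0$ for all $\setS$. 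The candidate I would take is the allocation proportional to the singleton values,
\begin{equation*}
p_j \;=\; \alpha(m)\,v(\{j\}),\qquad j\in\coal,
\end{equation*}
which is efficient, $\sum_{j\in\coal}p_j=\alpha(m)\,s(\coal)=v(\coal)$, hence $p\in\setV$ in the sense of~\eqref{eq_impdef}.

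To check $p\in\setO$, fix $\setS\subseteq\coal$. Since $|\setS|\le m$ and $\alpha$ is nonincreasing we have $\alpha(m)\le\alpha(|\setS|)$, and since $s(\setS)\ge 0$,
\begin{equation*}
\sum_{j\in\setS}p_j \;=\; \alpha(m)\,s(\setS) \;\le\; \alpha(|\setS|)\,s(\setS) \;=\; v(\setS),
\end{equation*}
so the excess $e(\setS,p)=v(\setS)-\sum_{j\in\setS}p_j$ has the sign required by the definition of $\setO$ for every $\setS\subseteq\coal$ (with equality for $\setS=\coal$), which is precisely membership in the core; thus $\setO\neq\varnothing$. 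Equivalently one may argue by Bondareva--Shapley: for any balanced family $\setB\subseteq 2^{\coal}$ with weights $(\lambda_{\setS})$ the identity $\sum_{\setS\in\setB}\lambda_{\setS}\,s(\setS)=s(\coal)$ gives $\sum_{\setS\in\setB}\lambda_{\setS}\,v(\setS)-v(\coal)=\sum_{\setS\in\setB}\lambda_{\setS}\bigl(\alpha(|\setS|)-\alpha(m)\bigr)s(\setS)\ge 0$, so the game $\langle\coal,v\rangle$ is balanced.

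Note that this verification uses only the monotonicity of $\alpha$. The stronger hypothesis $\alpha(m)\le 2\alpha(m-1)-\alpha(1)$ is, I expect, what is needed to ensure not merely that $\setO$ is nonempty but that it is \emph{reached} by the redistribution actually used afterwards (the two-player split~\eqref{eq_Shapley} and the iterative satisfaction step~\eqref{eq_satisfaction}): starting from an efficient allocation and repeatedly satisfying subcoalition demands stays inside $\setO$ only if the cooperation surplus $s(\coal)-v(\coal)$ cannot be over-transferred, and the delicate case there is a largest proper subcoalition $\coal\setminus\{i\}$, for which the admissible transfer is governed by $\alpha(m-1)$ relative to $\alpha(m)$ and $\alpha(1)$ --- which is where a condition of the form $\alpha(m)\le 2\alpha(m-1)-\alpha(1)$ appears, the intermediate coalition sizes interpolating between the singleton and the leave-one-out constraints by the convexity that this inequality encodes. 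Accordingly, the main obstacle I anticipate is not the existence claim itself, but pinning down that the largest-proper-subcoalition constraints are the binding ones for the allocations produced by the bargaining mechanism and that they reduce exactly to the stated inequality; once this is settled, monotonicity of $\alpha$ takes care of all remaining subcoalitions just as above.
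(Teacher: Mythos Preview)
Your direct argument is correct and takes a genuinely different route from the paper. The paper does not exhibit a core point; instead it invokes Shapley's result that convexity of the game implies nonemptiness of the core, substitutes $v(\setS)=\alpha(|\setS|)\sum_{i\in\setS}v(\{i\})$ into the (cost-game) convexity inequality
\[
v(\setS\cup\{j\})-v(\setS)\;\ge\;v(\setT\cup\{j\})-v(\setT),\qquad \setT\subseteq\setS\subseteq\coal\setminus\{j\},
\]
and extracts the size-wise condition $\alpha(n)\le 2\alpha(n-1)-\alpha(1)$ as what makes this hold. So in the paper the extra hypothesis on $\alpha$ is not incidental: it is precisely what yields \emph{convexity}, which is strictly stronger than balancedness. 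Your proportional allocation $p_j=\alpha(m)v(\{j\})$ bypasses convexity altogether and shows that monotonicity of $\alpha$ (together with $v(\{i\})\ge 0$, which does hold here since $\Lambda_i\ge 0$ and $\chi_i\ge 0$) already suffices for $\setO\neq\varnothing$; the Bondareva--Shapley variant you add makes the same point. What each approach buys: yours proves the stated conclusion under weaker hypotheses and with an explicit core point; the paper's approach actually establishes the stronger property of convexity, which carries side benefits (e.g.\ the Shapley value lies in the core, the core is a regular polytope) even though Theorem~\ref{thm_convergTU} only invokes nonemptiness. Your final paragraph's guess---that the inequality $\alpha(n)\le 2\alpha(n-1)-\alpha(1)$ is there to guarantee that the iterative satisfaction scheme~\eqref{eq_satisfaction} \emph{reaches} the core---is therefore slightly off: the convergence result cited from Cesco requires only $\setO\neq\varnothing$, and the extra condition is used in the paper solely to obtain convexity as a sufficient certificate for that nonemptiness.
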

\begin{proof}
Convexity of a game implies nonemptiness of the core~\cite{Shapley1971Convex}. Given the definition of convex cost game~\cite{Driessen1988book}
\begin{equation*}
v(\setS\cup\{j\}) - v(\setS) \geq v(\setT\cup\{j\}) - v(\setT),\, \forall \setT\subseteq\setS\subseteq\coal\setminus\{j\},
\end{equation*}
for all $j\in\coal$, we use Assumption~\ref{assum_superadd} to obtain
\begin{multline*}
\alpha(|\setS|+1) \left(\sum_{i\in\setS} v(\{i\}) +v(\{j\}) \right) - \alpha(|\setS|) \sum_{i\in\setS} v(\{i\})  \\
\geq \alpha(|\setT|+1) \left(\sum_{i\in\setT} v(\{i\}) +v(\{j\}) \right) - \alpha(|\setT|) \sum_{i\in\setT} v(\{i\}),\\
\forall \setT\subseteq\setS\subseteq\coal\setminus\{j\}.
\end{multline*}
From here we derive the upper bound conditions on $\alpha(\cdot)$, over the possible coalition sizes, that guarantee the convexity of the game.
\end{proof}
Now we are ready to state the main property of the proposed algorithm, which stems from the work of~\cite{Cesco98aconvergent} and~\cite{Shapley1971Convex}. 
\begin{thm}\label{thm_convergTU}
Consider the subgame $\Gamma_{\coal}=\langle\coal,v\rangle$ associated to the coalition $\coal\in\coalstr$, and let Assumptions~\ref{assum_TU} and~\ref{assum_superadd} hold. Let $p^{(l)}$ be an allocation vector resulting from the reallocation mechanism described by~\eqref{eq_satisfaction} at a given iteration $l\in\mathbb{N}$. Let the core $\setO_{\coal}$ of $\Gamma_{\coal}$ be a nonempty set. Then the distance of any allocation to the core decreases at each successive iteration, 
i.e., $\min_{z\in\setO_{\coal}} \|p^{(l)}-z\|_2 \leq \min_{z\in\setO_{\coal}} \|p^{(l-1)}-z\|_2$. 
\end{thm}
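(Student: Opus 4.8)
The plan is to show that the reallocation step~\eqref{eq_satisfaction} is \emph{Fej\'er monotone} with respect to the core $\setO_{\coal}$: every point of $\setO_{\coal}$ is at least as close to $p^{(l)}$ as to $p^{(l-1)}$. By hypothesis $\setO_{\coal}\neq\varnothing$, and it is compact --- closed as an intersection of closed half-spaces $\{\,\sum_{j\in\setS}p_j\ge v(\setS)\,\}$ with the hyperplane $\setV=\{\,\sum_{j\in\coal}p_j=v(\coal)\,\}$, and bounded because for each $j$ the inequalities $e(\{j\},p)\le 0$ and $e(\coal\setminus\{j\},p)\le 0$ together with efficiency confine $p_j$ to $[\,v(\{j\}),\,v(\coal)-v(\coal\setminus\{j\})\,]$. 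Hence $d(q)\triangleq\min_{z\in\setO_{\coal}}\|q-z\|_2$ is attained for every $q$, and it suffices to prove
\[
\|p^{(l)}-z\|_2\le\|p^{(l-1)}-z\|_2\qquad\text{for \emph{every} }z\in\setO_{\coal};
\]
evaluating this at the minimiser $z^{\ast}$ of $\|p^{(l-1)}-\,\cdot\,\|_2$ over $\setO_{\coal}$ then gives $d(p^{(l)})\le\|p^{(l)}-z^{\ast}\|_2\le\|p^{(l-1)}-z^{\ast}\|_2=d(p^{(l-1)})$.

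First I would fix $z\in\setO_{\coal}$, abbreviate $p\triangleq p^{(l-1)}$, $p'\triangleq p^{(l)}$, and let $(\setS,\delta)$, $\delta=e(\setS,p)$, be the demand satisfied at iteration $l$; there is nothing to prove when $\delta=0$, so assume $\delta>0$. Step~\eqref{eq_satisfaction} meets the demand of $\setS$, i.e. sets $e(\setS,p')=0$, by a uniform transfer of total mass $\delta$ between $\setS$ and $\coal\setminus\setS$, so $p'_j-p_j=\delta/|\setS|$ for $j\in\setS$ and $p'_j-p_j=-\delta/|\coal\setminus\setS|$ for $j\in\coal\setminus\setS$. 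Expanding,
\[
\|p'-z\|_2^2-\|p-z\|_2^2=\sum_{j\in\coal}(p'_j-p_j)^2+2\sum_{j\in\coal}(p'_j-p_j)(p_j-z_j).
\]
The first sum equals $\delta^2\bigl(|\setS|^{-1}+|\coal\setminus\setS|^{-1}\bigr)=\delta^2\,|\coal|/(|\setS|\,|\coal\setminus\setS|)$. For the cross term I would split over $\setS$ and $\coal\setminus\setS$; since $p,z\in\setV$ we have $\sum_{j\in\coal}(p_j-z_j)=0$, so with $D\triangleq\sum_{j\in\setS}(p_j-z_j)$ one gets $\sum_{j\in\coal\setminus\setS}(p_j-z_j)=-D$, and the cross term collapses to $2\delta D\,|\coal|/(|\setS|\,|\coal\setminus\setS|)$. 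Therefore
\[
\|p'-z\|_2^2-\|p-z\|_2^2=\frac{|\coal|\,\delta}{|\setS|\,|\coal\setminus\setS|}\,\bigl(\delta+2D\bigr).
\]

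It remains to sign $\delta+2D$ using that $z$ lies in the core while $p$ generated the demand. From $z\in\setO_{\coal}$, $e(\setS,z)\le 0$, i.e. $\sum_{j\in\setS}z_j\ge v(\setS)$; and $\delta=e(\setS,p)=v(\setS)-\sum_{j\in\setS}p_j$ gives $\sum_{j\in\setS}p_j=v(\setS)-\delta$. Hence $D=\sum_{j\in\setS}p_j-\sum_{j\in\setS}z_j\le -\delta$, so $\delta+2D\le-\delta<0$ and
\[
\|p'-z\|_2^2-\|p-z\|_2^2\le-\frac{|\coal|\,\delta^2}{|\setS|\,|\coal\setminus\setS|}<0 .
\]
This holds for every $z\in\setO_{\coal}$, which is the claim; taking $z=z^{\ast}$ as above yields the theorem. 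Moreover the bound is \emph{uniform} in $z$, so the same computation iterated along the sequence is precisely what one would use to conclude that $\{p^{(l)}\}$ converges to $\setO_{\coal}$.

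The computation itself is short; the points to watch are bookkeeping. One must keep the signs in~\eqref{eq_satisfaction} consistent with the excess/demand convention, so that the satisfied allocation genuinely has $e(\setS,p')=0$ and the mass flows in the direction that removes the demand. And one must prove monotonicity against \emph{all} of $\setO_{\coal}$ simultaneously, not merely against the core point nearest to $p^{(l-1)}$, since that nearest point can move after the step --- it is the uniformity of the estimate above that makes this work. Assumptions~\ref{assum_TU} and~\ref{assum_superadd} (through Lemma~\ref{lem_convex}) enter only to secure $\setO_{\coal}\neq\varnothing$, so that $d(\cdot)$ is finite and attained; the metric inequality needs nothing beyond efficiency of the iterates and the defining inequalities of the core.
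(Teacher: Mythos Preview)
Your argument is correct. The Fej\'er-monotonicity computation is clean: the decomposition of $\|p'-z\|_2^2-\|p-z\|_2^2$ into the squared-increment term and the cross term, followed by the bound $D\le-\delta$ coming from $e(\setS,z)\le 0$, is exactly what is needed, and the uniformity in $z\in\setO_{\coal}$ is the right observation to pass from pointwise to distance-to-set monotonicity. Your remark on compactness of $\setO_{\coal}$ (so that the minimum is attained) is a nice piece of hygiene the paper omits, and your explicit flag on the sign bookkeeping in~\eqref{eq_satisfaction} is well placed --- the paper's cost-game semantics and its stated excess/core conventions are not fully consistent, and your choice (the one that actually yields $e(\setS,p')=0$) is the only one under which the claim is true.

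By way of comparison, the paper does not give an argument at all: its proof is a one-line appeal to Cesco's convergent transfer scheme together with Lemma~\ref{lem_convex} for nonemptiness of the core. Your write-up is essentially a self-contained derivation of the relevant part of that reference, specialised to the egalitarian redistribution~\eqref{eq_satisfaction}. What you gain is transparency --- the reader sees precisely where the core inequalities enter and why the step cannot increase distance --- and a quantitative decrement $-\,|\coal|\,\delta^{2}/(|\setS|\,|\coal\setminus\setS|)$ that the citation-only proof leaves implicit. What the citation buys is brevity and the immediate extension to the $\varepsilon$-core (Corollary~\ref{coroll_epscoreCesco}), which in your approach would require repeating the computation with $e(\setS,z)\le\varepsilon$ in place of $e(\setS,z)\le 0$, yielding $D\le\varepsilon-\delta$ and hence nonpositivity whenever the demand has excess at least $\varepsilon$.
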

\begin{proof}
The proof relies on~\cite{Cesco98aconvergent}, and on the nonemptiness of the core, established in Lemma~\ref{lem_convex}.
\end{proof}
\begin{rem}
This is not a sharp result. Convexity is a strong condition, sufficient but not necessary for the nonemptiness of the core. Indeed, the latter is directly connected with the less strict category of \emph{balanced} games. However, balancedness of a game needs to be numerically addressed even for a number of agents as low as five~\cite{Shapley1967Naval}. Hence, convergence of the redistribution mechanism may hold even if the convexity requirements established in Lemma~\ref{lem_convex} are not met.
\end{rem}
Finally, we report an additional relevant property of the algorithm.
\begin{coroll}\label{coroll_epscoreCesco}
Let $\setO = \varnothing$, and let $\setO(\varepsilon)$ be the \emph{least}-core, defined as
\begin{equation}
\setO(\varepsilon) = \left\{p\in\setV :\, e(\setS,p)\leq \varepsilon,\forall \setS\subseteq\coal\right\}.
\label{eq_epscoredef}
\end{equation}
where $\varepsilon \geq 0$ is the smallest such that $\setO(\varepsilon)$ is nonempty. Then the results of Theorem~\ref{thm_convergTU} apply to 
$z\in\setO(\varepsilon)$.
\end{coroll}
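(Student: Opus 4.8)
The plan is to reduce the empty-core case to the nonempty-core case already settled by Theorem~\ref{thm_convergTU}, by passing to a strategically equivalent TU game whose core is exactly the least-core $\setO(\varepsilon)$. First I would check that the least-core is well defined, i.e.\ that the smallest $\varepsilon\ge0$ with $\setO(\varepsilon)\neq\varnothing$ exists. Fixing any $p\in\setV$ (for instance $p_j = v(\coal)/|\coal|$), finiteness of $2^{\coal}$ gives $\bar\varepsilon\triangleq\max_{\setS\subseteq\coal}e(\setS,p)<\infty$, so $p\in\setO(\bar\varepsilon)$ and $\{\varepsilon\ge0:\setO(\varepsilon)\neq\varnothing\}$ is nonempty; it is an up-set (since $\varepsilon_1\le\varepsilon_2$ implies $\setO(\varepsilon_1)\subseteq\setO(\varepsilon_2)$) and closed, because $\setO(\varepsilon)$ is the intersection of the affine set $\setV$ with finitely many closed halfspaces $\{e(\setS,\cdot)\le\varepsilon\}$, so the infimum is attained. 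Since $\setO=\setO(0)=\varnothing$ by hypothesis, this minimizer satisfies $\varepsilon>0$.

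Next I would introduce the shifted characteristic function $v_\varepsilon:2^{\coal}\to\mathbb{R}$ defined by $v_\varepsilon(\setS)=v(\setS)-\varepsilon$ for every $\setS$ with $\varnothing\neq\setS\subsetneq\coal$, $v_\varepsilon(\coal)=v(\coal)$ and $v_\varepsilon(\varnothing)=0$, and consider $\Gamma_\coal^{\varepsilon}=\langle\coal,v_\varepsilon\rangle$. Its excess is $e_\varepsilon(\setS,p)=v_\varepsilon(\setS)-\sum_{j\in\setS}p_j=e(\setS,p)-\varepsilon$ for proper nonempty $\setS$, and $0$ otherwise; comparing with~\eqref{eq_epscoredef}, the core of $\Gamma_\coal^{\varepsilon}$ is precisely $\setO(\varepsilon)$, and the efficiency set $\setV$ is unchanged because $v_\varepsilon(\coal)=v(\coal)$. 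By the choice of $\varepsilon$ this core is nonempty. Assumption~\ref{assum_TU} is preserved under the constant shift (the merger surplus merely increases by $\varepsilon$ and is still freely transferable), so $\Gamma_\coal^{\varepsilon}$ meets every hypothesis of Theorem~\ref{thm_convergTU} except possibly Assumption~\ref{assum_superadd}; but that assumption enters Theorem~\ref{thm_convergTU} only through Lemma~\ref{lem_convex}, i.e.\ only to certify nonemptiness of the core, which here is guaranteed directly. Hence that use is superseded and the argument of~\cite{Cesco98aconvergent} underlying Theorem~\ref{thm_convergTU} applies to $\Gamma_\coal^{\varepsilon}$.

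Finally I would match the reallocation mechanism: reading~\eqref{eq_satisfaction} with $e_\varepsilon(\setS,p)$ in place of $e(\setS,p)$---equivalently, letting a subcoalition $\setS$ raise a demand only while $e(\setS,p)>\varepsilon$ and equalising that excess down to $\varepsilon$ rather than to $0$---the update is exactly the egalitarian transfer step associated with $\Gamma_\coal^{\varepsilon}$, and it keeps the iterates in the common set $\setV$. Applying Theorem~\ref{thm_convergTU} to $\Gamma_\coal^{\varepsilon}$ then gives, for any $l\in\mathbb{N}$, $\min_{z\in\setO(\varepsilon)}\|p^{(l)}-z\|_2\le\min_{z\in\setO(\varepsilon)}\|p^{(l-1)}-z\|_2$, which is the claim.

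The step I expect to be the main obstacle is the structural point at the end of the second paragraph: one must be certain that the Cesco-type proof behind Theorem~\ref{thm_convergTU} relies only on nonemptiness of the core plus the geometry of the egalitarian transfer (a hyperplane-projection-type step that cannot increase the distance to a nonempty closed convex set lying in the relevant halfspace), with no hidden appeal to convexity or superadditivity of the game; otherwise the reduction breaks and the contraction estimate would have to be re-derived directly for $v_\varepsilon$. A minor point to dispatch along the way is that $v_\varepsilon$ need not be superadditive or monotone even when $v$ is, but since only core-nonemptiness of $\Gamma_\coal^{\varepsilon}$ is used, this is harmless.
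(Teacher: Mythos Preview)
The paper does not supply a proof for this corollary; it is stated immediately after Theorem~\ref{thm_convergTU} and is meant to follow directly from~\cite{Cesco98aconvergent}, whose transfer-scheme analysis already covers convergence to the least-core when the core is empty. Your argument is therefore considerably more explicit than anything the paper provides.

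Your reduction via the shifted game $\Gamma_\coal^{\varepsilon}$ is a clean and standard device, and your observation that Assumption~\ref{assum_superadd} enters the proof of Theorem~\ref{thm_convergTU} only through Lemma~\ref{lem_convex}---i.e.\ solely to certify a nonempty core---is correct and is exactly what makes the corollary go through. The well-definedness argument for $\varepsilon$ is also fine.

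The one place your argument goes beyond a bare citation is the third paragraph, where you re-read~\eqref{eq_satisfaction} with $e_\varepsilon$ in place of $e$. Be aware that this is a genuine modification of the dynamics: the step $p_j' = p_j - e_\varepsilon(\setS,p)/|\setS|$ differs from $p_j' = p_j - e(\setS,p)/|\setS|$ by the constant $\varepsilon/|\setS|$. The corollary's phrasing (``the results of Theorem~\ref{thm_convergTU} apply to $z\in\setO(\varepsilon)$'') is loose enough to accommodate either reading, and the paper gives no further guidance; but if the intent is that the \emph{unmodified} scheme~\eqref{eq_satisfaction} contracts toward $\setO(\varepsilon)$, your shifted-game reduction does not quite deliver that, and one must fall back on Cesco's direct analysis of the least-core case. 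Your closing paragraph already flags precisely this as the main obstacle, so you have correctly located the only real gap.
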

\begin{rem} 
Nonemptiness of the core can be checked in polynomial time if the complete description of the game is available~\cite{ChalkiadakisEtAl2012}. However, notice that one of the main features of the proposed algorithm is that the computation of the value of the complete subgame (i.e., $2^{|\coal|}$ possible pairs $\{\setS,\,\coal\setminus\setS\}$, with $\setS\subseteq\coal$, evaluated following the procedure in Section~\ref{sec_COO}) is not required for convergence. Under the informational constraints that characterize the system under study, this becomes very relevant from a practical point of view, as it substantially relaxes the cognitive and computational requirements of the proposed scheme~\cite{SANDHOLM1999}. 
\end{rem}

\section{Bargaining procedure}
\label{sec_bargaining}
At every time $k\in\setT_{\coal}$, Algorithm~\ref{alg_coalmpc} is executed. All players initiate a pairwise bargaining whose outcome will dictate the evolution of the coalitional structure. The procedure follows the evaluation of the coalitional benefit described in Section~\ref{sec_COO}. Note that the possible pairs are restricted to those considering dynamically coupled players.
\begin{rem}
In general it might not be viable to exhaustively evaluate all possible pairs of coalitions in $\{\coal_1,\ldots,\coal_{n_c}\}$. In practice, several (dynamically coupled) pairs $\setP_1,\setP_2\in\coalstr(\mmg(k))$ can be randomly selected; in this case the final outcome of the coalition formation process might be influenced by the random selection order~\cite{Jackson2010}.
\end{rem}
If condition~\eqref{eq_COO_OK} is verified for a given pair $\coal_1,\coal_2\in\coalstr$, the coalition $\coal_1\cup\coal_2$ is formed. The allocation of all the agents composing the new coalition is initialized by an equal share of the aggregate cost. Since this allocation is not necessarily stable (in the coalitional sense), Algorithm~\ref{alg_TUcoalsplit} is executed. 
Requests for utility transfer within $\coal_1\cup\coal_2$ are checked over a finite number of different subsets $\setS_{1}\subset\coal_1\cup\coal_2$ (note that the check is made over both $\setS_{1}$ and its complementary set $\setS_{2} = (\coal_1\cup\coal_2) \setminus \setS_{1}$).
If some subset of agents is dissatisfied with the currently assigned allocation, the iterative utility transfer scheme described in Section~\ref{sec_coalstab} is performed.\par
Algorithm~\ref{alg_TUcoalsplit} is performed also in the case where all the agents have already joined the grand coalition (no pairs are available), and in the case in which the merger between two players is not successful. 
Demands are checked similarly as described for the case in which a new merger is formed. Let $\coal_i$ be the coalition under analysis, and $\setS_1,\setS_2\subset\coal_i$. In this case the predictions for $v(\setS_1),v(\setS_2),v(\coal_i)$ will be updated according to the current state of the system, and condition~\eqref{eq_COO_OK} might not be fulfilled anymore. If this happens, either of the subsets $\setS_1,\setS_2$ will leave the coalition. Thus, while any coalition is formed through a bilateral agreement, a player can leave it unilaterally.
\begin{algorithm}
\caption{Bottom-up coalitional control}
\label{alg_coalmpc}
\begin{algorithmic}
\REQUIRE $\coalstr = \{\coal_1,\ldots,\coal_{n_c}\}$ 
\ENSURE coalition structure $\coalstr^+$, allocation vector $p^+\in\mathbb{R}^{|\setN|}$
\IF{$n_c> 1$}
\FORALL{pairs $\setP_1,\setP_2\in\coalstr(k)$ such that $\exists i\in\setP_1, j\in\setP_2:\, i\in\setM_j \, \vee \,j\in\setM_i$}
\STATE Call Algorithm~\ref{alg_coalform};
\ENDFOR
\ELSE
\STATE Call Algorithm~\ref{alg_TUcoalsplit} with $\setP_A = \setN$ and $\setP_B=\varnothing$.
\ENDIF
\end{algorithmic}
\end{algorithm}
\begin{algorithm}
\caption{Coalition formation}
\label{alg_coalform}
\begin{algorithmic}
\REQUIRE $\setP_1,\setP_2\in\coalstr$, {\ttfamily max\_iter}
\ENSURE $\coalstr^+$, allocation vector $(p_j^+)_{j\in\setP_1\unione\setP_2}$
\STATE $\vmerger \leftarrow$ minimize~\eqref{eq_coal_MPC} over $\bfu_{1\unionped 2}$;
\STATE initialize $\tilde{\bfu}_{1}$ and $\tilde{\bfu}_{2}$ with feasible trajectories; 
\FOR{$t=1,\ldots,$ {\ttfamily max\_iter}}
	\STATE $v(\setS_1) \leftarrow$ solve~\eqref{eq_coal_MPC} over $\bfu_1$ w.r.t. $\tilde{\bfu}_{2}$, replacing~\eqref{eq_mod_MPC} with~\eqref{eq_mpc_unilateral};
	\STATE $v(\setS_2) \leftarrow$ solve~\eqref{eq_coal_MPC} over $\bfu_2$ w.r.t. $\tilde{\bfu}_{1}$, replacing~\eqref{eq_mod_MPC} with~\eqref{eq_mpc_unilateral};
	\STATE $\tilde{\bfu}_{1}\leftarrow\bfu_{1}^{\ast}$; $\quad\tilde{\bfu}_{2}\leftarrow\bfu_{2}^{\ast}$;
\ENDFOR
\IF{\eqref{eq_COO_OK} is verified}
\STATE $\coalstr^+\leftarrow$ $\coalstr\setminus\{\setP_1,\setP_2\}\cup\{\setP_1\cup\setP_2\}$; \COMMENT{Form coalition} 
\STATE $p_j := \vmerger/|\setP_1\cup\setP_2|$ for all $j\in\setP_1\cup\setP_2$; \COMMENT{Initialize with egalitarian allocation}
\STATE $(p^+_j)_{j\in\setP_1\unione\setP_2}\leftarrow$ Call Algorithm~\ref{alg_TUcoalsplit};
\ELSE
\STATE Call Algorithm~\ref{alg_TUcoalsplit} for $\setP_1$ and $\setP_2$. \COMMENT{Check for demands within each coalition}
\ENDIF
\end{algorithmic}
\end{algorithm}
\begin{algorithm}[t]
\caption{Utility transfer / Coalition splitting}
\label{alg_TUcoalsplit}
\begin{algorithmic}[1]
\REQUIRE $\setP\in\coalstr$, allocation vector $(p_j)_{j\in\setP}$, {\ttfamily max\_loops}, {\ttfamily max\_iter} 
\ENSURE $\{\coal'_1,\ldots,\coal'_{n_c}\}$ where $\bigcup_i \coal'_i = \setP$, allocation vector $(p_j')_{j\in\bigcup_i \coal'_i}$
\STATE $n_c\leftarrow 0$;
\STATE {\ttfamily n\_loops}$\leftarrow 0$;
\STATE {\ttfamily flag}$\leftarrow$FALSE;
	\REPEAT
		\STATE (randomly) choose $\setS_1\subset\setP$; 
		\STATE $\setS_2\leftarrow\setP\setminus\setS_1$; 	
			\STATE $\bfp_1\leftarrow \sum_{j\in\setS_1}p_j$; $\quad\bfp_2\leftarrow \sum_{j\in\setS_2}p_j$;
		\STATE initialize $\tilde{\bfu}_{1}$ and $\tilde{\bfu}_{2}$ with feasible trajectories; 
		\FOR{$t=1,\ldots,$ {\ttfamily max\_iter}}
			\STATE $v(\setS_1) \leftarrow$ solve~\eqref{eq_coal_MPC} over $\bfu_1$ w.r.t. $\tilde{\bfu}_{2}$, replacing~\eqref{eq_mod_MPC} with~\eqref{eq_mpc_unilateral};
			\STATE $v(\setS_2) \leftarrow$ solve~\eqref{eq_coal_MPC} over $\bfu_2$ w.r.t. $\tilde{\bfu}_{1}$, replacing~\eqref{eq_mod_MPC} with~\eqref{eq_mpc_unilateral};
			\STATE $\tilde{\bfu}_{1}\leftarrow\bfu_{1}^{\ast}$; $\quad\tilde{\bfu}_{2}\leftarrow\bfu_{2}^{\ast}$;
		\ENDFOR
		\IF {$\bfp_i +\bfp_j > v(\setS_i)+v(\setS_j)$ } 
			\STATE {\ttfamily flag}$\leftarrow$TRUE; \COMMENT{$\setP_r$ splits into $\setS_1$ and $\setS_2$}
			\FOR{$i=1,2$}
			\STATE $j\in\{1,2\}\setminus\{i\}$;
				\STATE $n_c\leftarrow n_c + 1$;
				\STATE $\coal_{n_c}\leftarrow\setS_i$;
				\STATE Initialize payoff of every agent in $\coal_{n_s}$ by equally splitting $v(\setS_i)$;
				\STATE Call Algorithm~\ref{alg_TUcoalsplit} for $\coal_{n_s}$;
			\ENDFOR
		\ELSE 
			\IF {$\bfp_1>v(\setS_1)$ \textbf{or} $\bfp_2>v(\setS_2)$} 
				\STATE $e\leftarrow \bfp_1 - v(\setS_1)$;
				\STATE $p_j'\leftarrow p_j - e/|\setS_1|$ for $j\in\setS_1$; \COMMENT{Satisfy demand}
				\STATE $p_j'\leftarrow p_j + e/|\setS_2|$ for $j\in\setS_2$;
			\ENDIF
		\ENDIF
		\STATE {\ttfamily n\_loops}$\leftarrow ${\ttfamily n\_loops}$+1$;
	\UNTIL{{\ttfamily flag}=FALSE} \textbf{and} {\ttfamily n\_loops}$<${\ttfamily max\_loops}.
\end{algorithmic}
\end{algorithm}
\begin{rem}
The procedure is independent for every pair of players, and the execution of the algorithm can be parallelized.
\end{rem}
Finally, for all instants $k\in\mathbb{N}$, the allocation of each agent $j\in\coal_i$ is updated as
\begin{equation}
p_j(k) = \frac{p_j(k')}{v(\coal_i)} \left[\Lambda_i(\bfx_i(k),\bfu_i(k)) + \chi_i(\mmg_i(k'))\right],
\label{eq_updatealloc}
\end{equation}
where $k'\in\setT_{\coal}$ is the time corresponding to the last execution of Algorithm~\ref{alg_coalmpc}, and $v(\coal_i)$ is the value of coalition $\coal_i$ computed at time $k'$.

\section{Example}
\label{sec_example}
To test the proposed algorithm, we address the wide-area control (WAC) of power networks~\cite{ChakraborttyKhargonekar2013ACC}. The objective of WAC is to damp inter-area oscillations arising among connected generators, causing undesired power transfers. These oscillations have been poorly controllable with local (decentralized) control. The development of flexible AC transmission systems (FACTS) and the recent availability of a capillary network of sensors such as the phasor measurement units (PMUs) have opened new possibilities in the control of the power grid~\cite{RiversoEtAl2013}. Yet the dense information exchange between PMUs installed at substations managed by different utility companies is not free of costs, and research has been focusing on the development of WAC strategies promoting the sparsity of inter-area communications~\cite{LianEtAl2017_JSAC,DorflerEtAl2014}.
\subsection{System description}
The power network consists of several areas coupled by transmission lines (see Fig.~\ref{fig_schemaPNS}). Local generation is available within each area. 
Our focus in on the load-frequency control (LFC) loop, to \emph{(i)} maintain the frequency around the nominal value, and \emph{(ii)} reduce power transfers between areas. In particular, we test the proposed framework in providing automatic generation control (AGC) to regulate the frequency to its nominal value in presence of step changes in the load.\par
\begin{figure}
	\centering
	\def\svgwidth{0.8\columnwidth}
	\scriptsize{
		\input{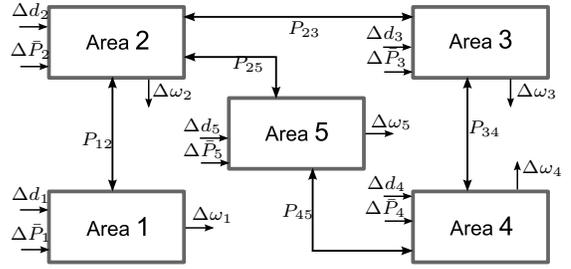}
		}
	\caption{Power network composed of 5 areas with local supply~\cite{RiversoPSNBenchmark2012}. Power transfers are possible between areas connected by transmission lines. The objective is to control inter-area oscillations---cause of undesired power transfers---and to minimize the deviation from the nominal frequency under step variations in the load. Two cases are considered: \emph{(i)} local production capacity is sufficient for locally matching the demand, \emph{(ii)} the capacity of local generation is impaired, making energy transfers from neighboring areas necessary for demand satisfaction.}
	\label{fig_schemaPNS}
\end{figure}
\begin{table}[tbp]
\renewcommand{\arraystretch}{1.3}
\caption{Symbols employed in the power network example.}
\label{tab_ex_symbol}
\centering
   \begin{tabular}{ c p{0.68\columnwidth} c }
	Symbol & Description & Unit \\ \hline
     $\Delta d$ & Deviation of the load from the nominal value (p.u.) & [-] \\ 
     $\Delta\theta$ & Variation in the rotor angle w.r.t. revolving magnetic field & [rad] \\
		 $\Delta\omega$ & Deviation from the nominal frequency & [rad/s] \\
		 $\Delta P_{m}$ & Deviation from the nominal mechanical power (p.u.) & [-] \\
		 $\Delta P_{v}$ & Deviation from the nominal steam valve position (p.u.) & [-] \\
		 $\Delta\bar{P}$ & Deviation of the power setpoint from the nominal value (p.u.) & [-] \\
		 $H$ & Machine inertia constant & [s] \\ 
		 $r_v$ & Rotor velocity regulation  & [rad/s] \\ 
		 $\rho_f$ & Load change / frequency variation (\%) & [-] \\
		 $\tau_{t}$ & Prime mover time constant & [s] \\
		 $\tau_{g}$ & Governor time constant & [s] \\ 
		 $P^{0}_{ij}$ & Synchronizing power coefficient & [rad$^{-1}$] %
   \end{tabular}
\end{table}
The energy supply in each area is provided by a power station equipped with single-stage turbines. We consider resistive loads, not sensitive to frequency variations (e.g., lighting, heating). Let each area be identified by an index in the set $\setN = \{1,\ldots,5\}$. The linearized dynamics of synchronous generators in area $i\in\setN$ result in the following continuous-time model~\cite[Chap.~12]{SaadatPowerSysAnalysisBook}:
\begin{equation}
\dot{x}_i = A_{ii}x_i + B_{ii} u_i + D_i \Delta d_{i} + \sum_{j\in\setM} A_{ij}x_j,
\label{eq_PNScontmodel}
\end{equation}
where $x_i\triangleq[\Delta\theta_i,\Delta\omega_i,\Delta P_{m_i},\Delta P_{v_i}]\in\mathbb{R}^{4}$, $u_i = \Delta\bar{P}_i\in\mathbb{R}$, and $\Delta d_i\in\mathbb{R}$ is the variation in the demand (symbols are defined in Table~\ref{tab_ex_symbol}).
The last term in~\eqref{eq_PNScontmodel} describes the influence of coupled areas, identified in the set $\setM_i = \{j\in\setN\setminus\{i\} |\, A_{ij}\neq \mathbf{0}\}$. Matrices are composed as
{\arraycolsep=1.4pt\renewcommand{\arraystretch}{1.1}
\begin{equation}\label{eq_PNScontmat}
\begin{split}
A_{ii} & = \left[
\begin{array}{cccc}
0 & 1 & 0 & 0\\
-\frac{\sum_{j\in\setM_i}P^{0}_{ij}}{2H_i} & -\frac{\rho_{f_i}}{2H_i} & \frac{1}{2H_i} & 0\\
0 & 0 & -\frac{1}{\tau_{t_i}} & \frac{1}{\tau_{t_i}}\\
0 & -\frac{1}{r_{v_i} \tau_{g_i}} & 0 & -\frac{1}{\tau_{g_i}}
\end{array}\right]
\quad
B_i = \left[
\begin{array}{c}
0\\
0\\
0\\
\frac{1}{\tau_{g_i}}
\end{array}\right] \\
A_{ij} & = \left[
\begin{array}{cccc}
0 & 0 & 0 & 0\\
\frac{P^{0}_{ij}}{2H_i} & 0 & 0 & 0\\
0 & 0 & 0 & 0\\
0 & 0 & 0 & 0
\end{array}\right]
\quad
D_i = \left[
\begin{array}{c}
0\\
-\frac{1}{2H_i}\\
0\\
0
\end{array}\right].
\end{split}
\end{equation}}
For reasons of space, the values of the parameters are not reported here (the reader is referred to~\cite{RiversoPSNBenchmark2012}). The coupling of the generation frequency between areas connected through transmission lines appears in the second row of $A_{ii}$ and $A_{ij}$. For small deviations from the nominal value, inter-area power flows can be modeled as~\cite{SaadatPowerSysAnalysisBook}
\begin{equation}
\Delta P_{ij} = P^{0}_{ij} (\Delta \theta_i - \Delta \theta_j),\, i,j\in\setN,
\label{eq_powtransf}
\end{equation}
where $P^{0}_{ij}$, referred to as the \emph{synchronizing coefficient}, is the slope of the power-angle curve at the initial operating angle $\Delta \theta_{ij_0} = \Delta \theta_{i_0} - \Delta \theta_{j_0}$ between areas $i$ and $j$. These flows appear as a load increase in one area, and a load decrease in the other area. In particular, positive values of $\Delta P_{ij}$ indicate a transfer from area $i$ to area $j$.\par
Classic discretization yields non-sparse structures, unless very small sampling steps are employed~\cite{VadigepalliDoyle2003}. In order to preserve the topology of the system in the structure of the discrete-time model while avoiding the dependence on the sampling time, the continuous-time model~\eqref{eq_PNScontmodel} is discretized following the method of~\cite{FarinaEtAl2013}, with $T_s = 1\,\mathrm{s}$. More specifically, by treating $u_i$ as an exogenous input along with $\Delta d_{i}$ and $x_j$, the input-decoupled structure of the continuous-time model is replicated in discrete time. Notice that the use of such a method is reasonable in this kind of framework, where one basic assumption is that system-wide knowledge of the model is not likely to be achieved (besides communication constraints, one further reason is the dependence of the time constants characterizing the linear model on the current setpoints~\cite{DorflerEtAl2014}). From now on, any mention of the above matrices will refer to the discrete-time model.
\subsection{Controller design}
It can be inferred from~\eqref{eq_powtransf} that large energy transfers are caused by large differences in the angle deviation. The minimization of the energy transferred between connected areas can be implicitly addressed by penalizing large values of $\Delta \theta_i$; additionally, measures available from cooperating nodes can be exploited by penalizing the angle difference between the members of a given coalition. Therefore, the state weighting matrices in the objective function are chosen as
\begin{equation}\label{eq_weighttransfer}
\begin{split}
Q_{ii} & = \mathrm{diag}(q_{ii}^{\theta}+\sum_{j\in\setM_i}{q_{ij}^{\theta}+q_{ji}^{\theta}},q_{ii}^{\omega},q_{ii}^{P_{m}},q_{ii}^{P_{v}}),\\
Q_{ij} & = \mathrm{diag}(-\sum_{j\in\setM_i}q_{ij}^{\theta}+q_{ji}^{\theta},0,0,0),
\end{split}
\end{equation}
where $Q_{ij}\in\mathbb{R}^{n_i\times n_j}$ is the submatrix of $Q\in\mathbb{R}^{n\times n}$ relative to the coupling between nodes $i$ and $j$.
For noncooperative control, $q_{ij}=q_{ji}=0$; the rest of the values are defined as in~\cite{RiversoPSNBenchmark2012}, i.e., $Q_{ii} = \mathrm{diag}(500,0.01,0.01,10)$ and $R_i = 10$, $\forall i\in\setN$. In case of cooperation, we set $q_{ij}=q_{ji}=1000$.\par
\begin{figure}
\centering
\subfloat{
\label{fig_deltaPlim_1e_3}
\includegraphics[trim={0 0 0 0},clip,width=0.48\columnwidth]{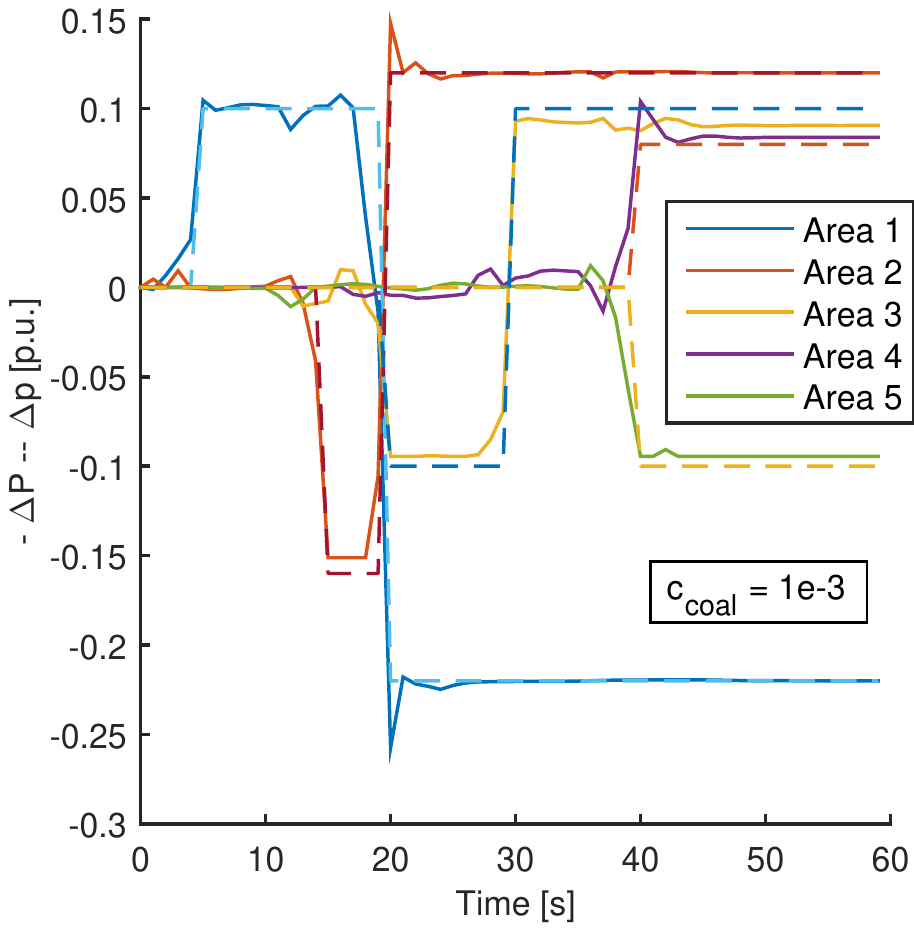}
}%
\subfloat{
\label{fig_deltaPlim_5e_4}
\includegraphics[trim={0 0 0 0},clip,width=0.48\columnwidth]{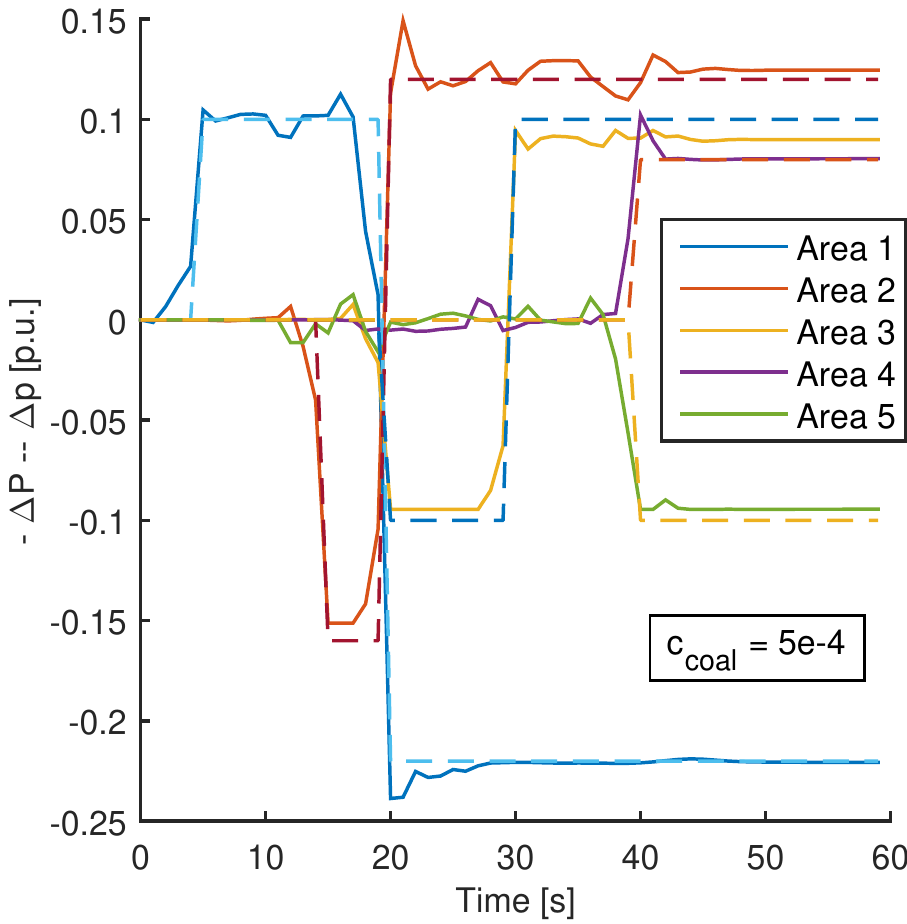}
}%
\caption{Scenario 2: the capacity of local generation is impaired, making energy transfers from neighboring areas necessary for demand satisfaction. Variation of the demand (dashed lines) and local power generation in the 5 areas. In the left plot, relative to $c_{\mathrm{coal}}=10^{-3}$, the lack of supply in Area~3 due to a 10\% capacity drop w.r.t the demand value is supplemented with energy transfers from Areas 4 and 5. Similarly, the right plot, corresponding to $c_{\mathrm{coal}}=5\cdot 10^{-4}$, shows that Area 3 receives additional supply from Areas 2 and 5 (see Fig.~\ref{fig_coalevoS2}). Power setpoints are computed with the RTO~\eqref{eq_RTO_ex}.}
\label{fig_deltaPlim}
\end{figure}
\begin{table}
\renewcommand{\arraystretch}{1.3}
\caption{Constraints on local generation.}
\label{tab_ex_constr_gen}
\centering
   \begin{tabular}{ c c c c c c }
	& $\left\|u_1\right\|_{\infty}\leq$ & $\left\|u_2\right\|_{\infty}\leq$ & $\left\|u_3\right\|_{\infty}\leq$ & $\left\|u_4\right\|_{\infty}\leq$ & $\left\|u_5\right\|_{\infty}\leq$\\ \hline
    S1 &  0.2310 & 0.1680 & 0.1050 & 0.0840 & 0.1050 \\ 
		S2 & 0.3465 & 0.1512 & 0.0945 & 0.1260 & 0.0945
   \end{tabular}
\end{table}
We test the capability of the coalitional controller based on autonomous coalition formation in achieving $\Delta \omega_i\rightarrow 0$ for all $i\in\setN$ in presence of step variations in the load $\Delta d_i$. Two scenarios are considered: in the first, local production capacity is sufficient for locally matching any demand, and the objective is to track the AGC reference $(\bar{x},\bar{u})$, computed as a function of the change in the grid load. Since each area's load must be matched with the local production, the components of the setpoint vector are defined as $\bar{x}_i = (0,0,\Delta d_i,\Delta d_i)$, $\bar{u}_i = \Delta d_i$, corresponding to the increment in the energy generation required to balance an increase in the demand. In the second scenario the capacity of local generation is impaired, making energy transfers from neighboring areas necessary for demand satisfaction (see Table~\ref{tab_ex_constr_gen}). These transfers are described by the coalitional setpoints optimized by an RTO layer
\begin{subequations}
\label{eq_RTO_ex}
\begin{align}
\min_{\bfu_i^{\mathrm{ref}},\bfx_i^{\mathrm{ref}}}  & \sum_{t=0}^{N_p-1} \left\|\bfu_i^{\mathrm{ref}}(t|k) - \bar{\bfu}_i(t|k)\right\|_{\bfR_i}^2\nonumber\\[-1.5ex]
& \qquad \qquad + \left\|\bfx_i^{\mathrm{ref}}(t+1|k) - \bar{\bfx}_i(t+1|k)\right\|_{\bfQ_i}^2  \label{eq_RTO_cost} \\ 
\mathrm{s.t.}& \nonumber\\
& \bfx_i(t+1|k)(I-\bfA_{ii}) - \bfB_{ii}\bfu_i(t|k)  = \mathbf{D}_i \delta_{i}(t|k),\label{eq_RTO_steadystate}\\
& \mathbf{1}\trasp \bfu_i(t|k)  = \mathbf{1}\trasp \delta_i(t|k),\label{eq_RTO_demsuplmatch}\\
& \bfu_i(t|k) \in\prod_{j\in\coal_i}\setU_j,\,t = 0,\ldots,N_p-1,\label{eq_restru_RTO}\\
& \sum_r P_{rj}(\Delta\theta_r - \Delta\theta_j) = [\Delta d_j - u_j^{\max}]_{+},\nonumber\\[-1.5ex]
& \qquad \qquad \qquad \qquad \qquad j\in\coal_i,\,\forall r\in\setM_j\cap\coal_i,\label{eq_trasf_RTO}\\
& \delta(t|k) = \hat{\delta}_i(k+t),\,t = 0,\ldots,N_p-1,\label{eq_RTO_disturb_estim}
\end{align}
\end{subequations}
where $\bfu_i^{\mathrm{ref}}\triangleq (\bfu_i^{\mathrm{ref}}(k),\ldots,\bfu_i^{\mathrm{ref}}(k+N_p-1))$ is the input reference trajectory along the horizon $N_p$ for $\coal_i$, and $\bfx_i^{\mathrm{ref}}\triangleq (\bfx_i^{\mathrm{ref}}(k+1),\ldots,\bfx_i^{\mathrm{ref}}(k+N_p))$ is the associated state reference. In the steady-state condition~\eqref{eq_RTO_steadystate}, $\delta_i \triangleq (\Delta d_j)_{j\in\coal_i}$ is the demand vector relative to all members of the coalition; \eqref{eq_RTO_demsuplmatch} defines the demand-supply equilibrium within a coalition, i.e., $\sum_{j\in\coal_i} \Delta \bar{P}_j = \sum_{j\in\coal_i} \Delta d_j$. In~\eqref{eq_trasf_RTO}, $[\cdot]_{+} \triangleq \max(\cdot,0)$. 
The quadratic coalitional stage cost in~\eqref{eq_RTO_cost} is defined by the weighting matrices $\bfQ_i = \mathrm{diag}(Q_{ii})$ and $\bfR_i = \mathrm{diag}(R_{ii})$, with $Q_{ii}=(10, 0, 100, 100)$ and $R_{ii} = 100$.
The setpoint $(\bfu_i^{\mathrm{ref}},\bfx_i^{\mathrm{ref}})\equiv (\bar{x}_i,\bar{u}_i)$ is assigned to singleton coalitions, since power transfers cannot be arranged for them.\par
The procedure described in Section~\ref{sec_bargaining} is followed to evaluate the possible formation of coalitions. At each time step the MPC problem~\eqref{eq_coal_MPC}---reformulated accordingly for the tracking of references $(\bfx_i^{\mathrm{ref}},\bfu_i^{\mathrm{ref}})$---is independently solved by the coalitions in $\coalstr(\mmg(k))$~\cite{Antoniovo2013Aut}.
Cooperation costs are defined as $\chi = c_{\mathrm{coal}}|\coal|^2$, for $|\coal|\geq 2$, $\chi = 0$ otherwise. The prediction horizon length is set to $N_p = 5$. Following~\cite{LimonEtAl2006_TAC} and setting $\bfQ_i^{\mathrm{f}} = 20 \bfQ_i$, the terminal cost in~\eqref{eq_coal_MPC_cost} is 
\begin{equation}
V_i^{\mathrm{f}}(\bfx_i(N_p|k)) = (\bfx_{i}(k) -\bar{\bfx}_{i} )\trasp \bfQ_i^{\mathrm{f}} (\bfx_{i}(k) -\bar{\bfx}_{i} ).
\label{eq_term_cost_powergrid}
\end{equation}
\subsection{Results}
\begin{figure}
	\centering
	\subfloat{
\label{fig_1e_3_S1}
\includegraphics[trim={0 0 0 0},clip,width=0.95\columnwidth]{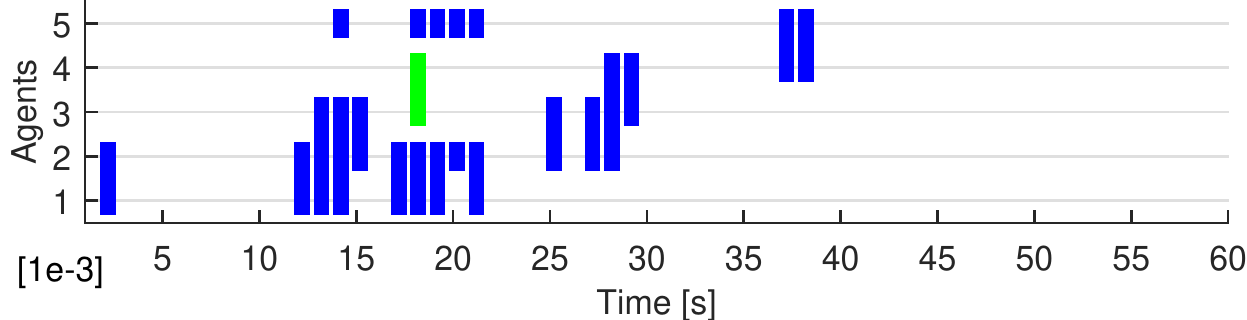}
}\\
\subfloat{
\label{fig_5e_4_S1}
\includegraphics[trim={0 0 0 0},clip,width=0.95\columnwidth]{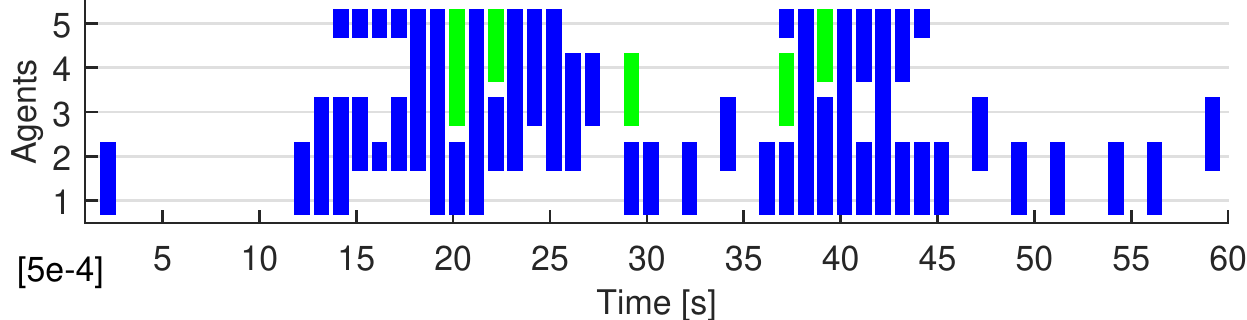}
}\\
\subfloat{
\label{fig_1e_4_S1}
\includegraphics[trim={0 0 0 0},clip,width=0.95\columnwidth]{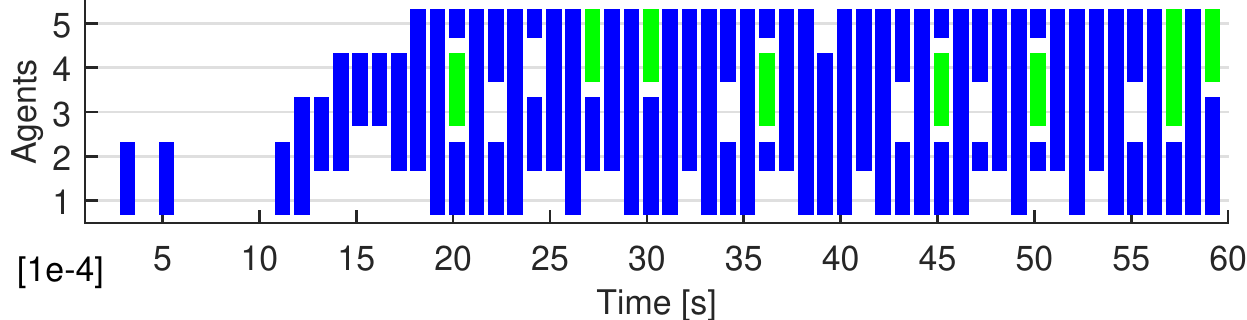}
}\\
	\subfloat{
\label{fig_1e_5_S1}
\includegraphics[trim={0 0 0 0},clip,width=0.95\columnwidth]{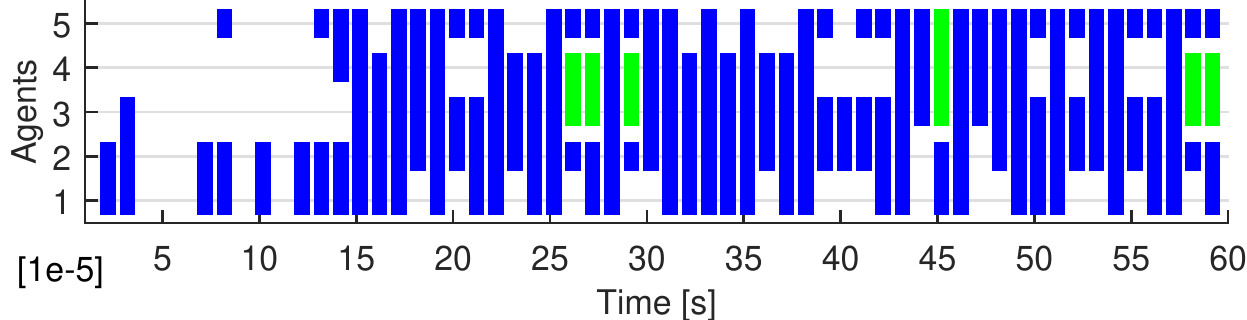}
}%
\caption{Scenario 1: production capacity is sufficient for locally matching any demand. Formation of coalitions for different values of $c_{\mathrm{coal}}$. Costs of cooperation are increasing with the coalition size, i.e., $\chi = c_{\mathrm{coal}}|\coal|^2$, for $|\coal|\geq 2$.}
	\label{fig_coalevoS1}
\end{figure}
\begin{figure}
	\centering
	\subfloat{
\label{fig_1e_3_S2}
\includegraphics[trim={0 0 0 0},clip,width=0.95\columnwidth]{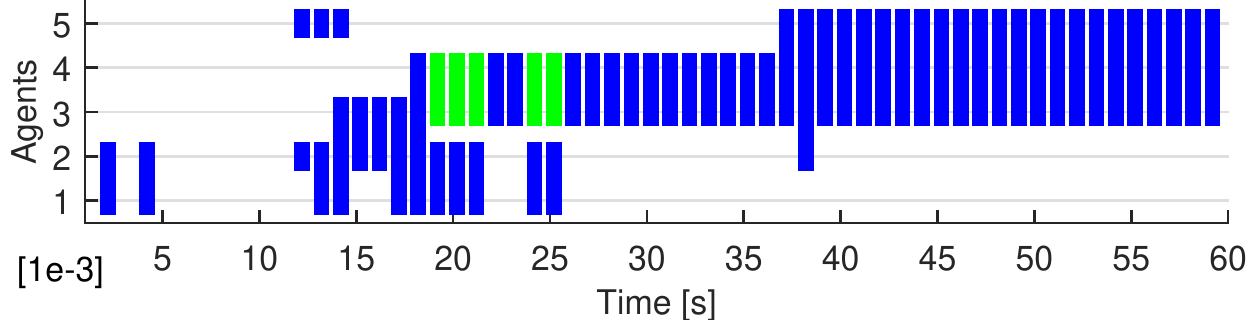}
}\\
\subfloat{
\label{fig_5e_4_S2}
\includegraphics[trim={0 0 0 0},clip,width=0.95\columnwidth]{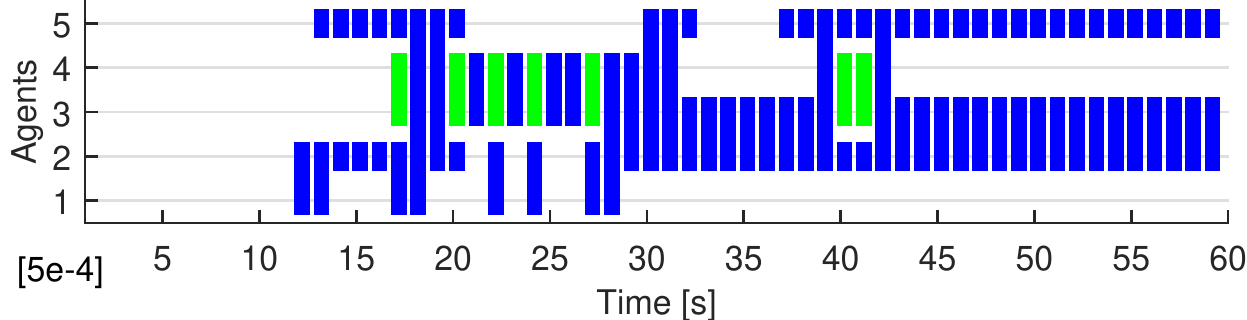}
}\\
\subfloat{
\label{fig_1e_4_S2}
\includegraphics[trim={0 0 0 0},clip,width=0.95\columnwidth]{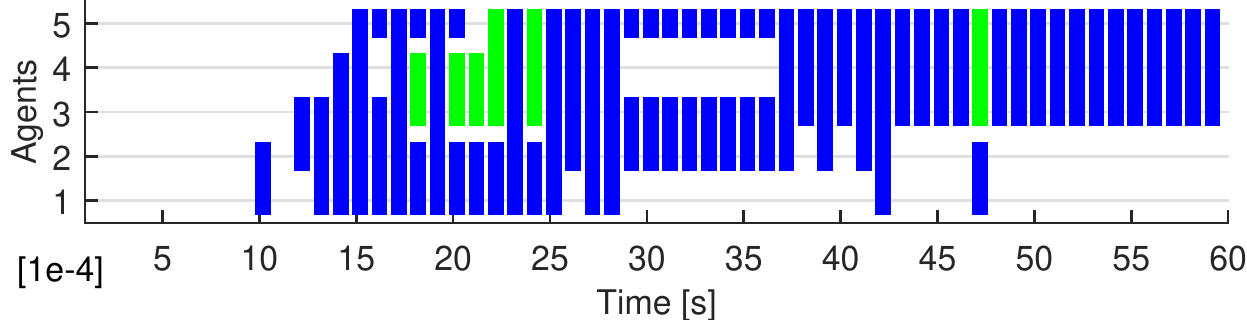}
}\\
	\subfloat{
\label{fig_1e_5_S2}
\includegraphics[trim={0 0 0 0},clip,width=0.95\columnwidth]{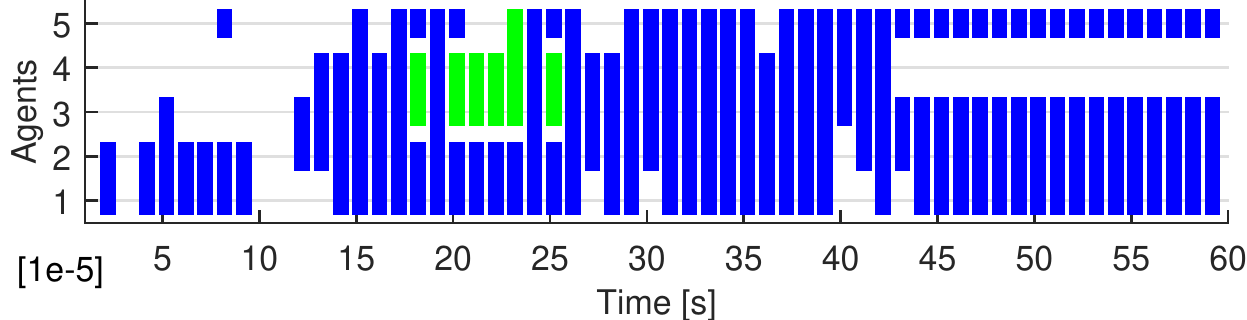}
}%
\caption{Scenario 2: the capacity of local generation is impaired, making energy transfers from neighboring areas necessary for demand satisfaction (see Fig.~\ref{fig_deltaPlim}). The plots show the evolution of coalitions for different values of $c_{\mathrm{coal}}$. Costs of cooperation increase with the coalition size, i.e., $\chi = c_{\mathrm{coal}}|\coal|^2$, for $|\coal|\geq 2$. The cooperation in this case follows a more stable pattern.}
	\label{fig_coalevoS2}
\end{figure}
In order to evaluate the variation of the controller performance over different degrees of cooperation, two indices are defined. 
The first is the average overall frequency deviation,
\begin{equation}
\eta(\omega) = \frac{1}{T_{\mathrm{sim}}}\sum_{t=1}^{T_{\mathrm{sim}}}\sum_{i\in\setN} \Delta\omega_i^2,
\label{eq_eta_f}
\end{equation}
and the second reflects the energy transferred between areas,
\begin{equation}
\psi(\theta) = \sum_{t=1}^{T_{\mathrm{sim}}}\sum_{i\in\setN}\sum_{j\in\setM_i} \left\|\Delta P_{ij}(t)T_s\right\|^2,
\label{eq_psi2}
\end{equation}
where $\Delta P_{ij}$ is defined in~\eqref{eq_powtransf}, and $T_s$ is the sampling time. These indices provide a measure of the global performance not dependent of the particular evolution of the coalition structure.\par 
In this case, the supply capacity in Area 3 is not always sufficient to fulfill the local demand; meanwhile, generators in Area~5 cannot decrease their production to match the lowest local demand level, so the excess of production is transferred to other areas. As can be seen in Figure~\ref{fig_deltaPlim}, the lack of supply capacity is covered by neighboring generators.\par
\begin{figure}
\centering
\subfloat{
\label{fig_eta}
\includegraphics[trim={0 0 0 0},clip,height=\altfigur]{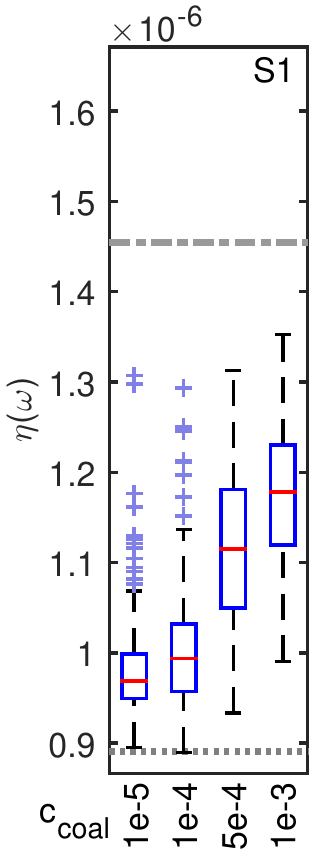}
}\hspace*{-0.7em}%
\subfloat{
\label{fig_eta_lim}
\includegraphics[trim={0 0 0 0},clip,height=\altfigur]{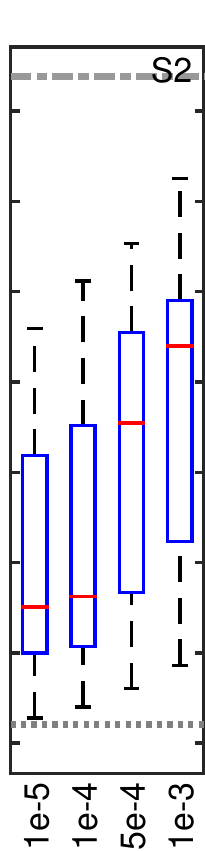}
}\hspace*{3em}%
\subfloat{
\label{fig_psi}
\includegraphics[trim={0 0 0 0},clip,height=\altfigur]{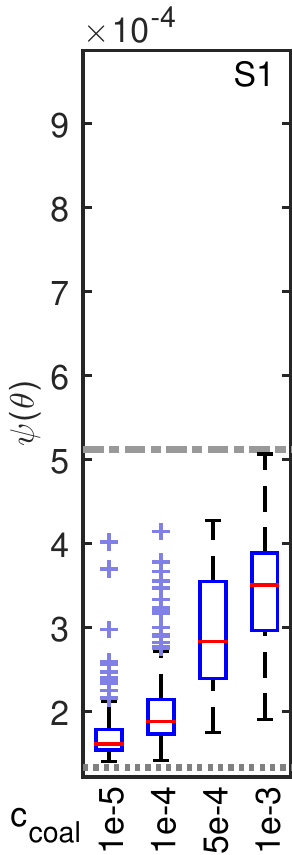}
}\hspace*{-0.7em}%
\subfloat{
\label{fig_psi_lim}
\includegraphics[trim={0 0 0 0},clip,height=\altfigur]{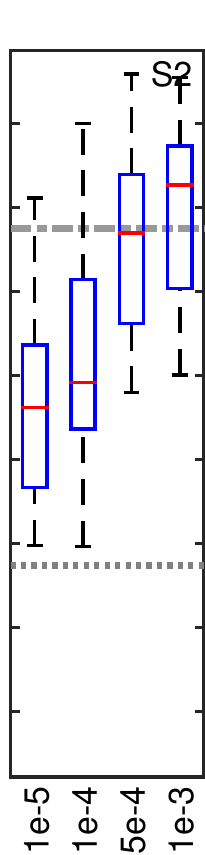}
}%
\caption{Performance index $\eta(\omega)$ (left) and $\psi(\theta)$ (right), respectively regarding the minimization of the frequency deviation and of inter-area energy transfers, for increasing values of $c_{\mathrm{coal}}$. 
Plots marked with S1 are relative to Scenario~1, while S2 refers to the case in which areas 2, 3 and 5 experience limitations in their power generation. The dotted line marks the performance of the strictly cooperative strategy (centralized MPC), whereas the dashed-dotted line refers to the strictly noncooperative one. See Fig.~\ref{fig_TU_S1} for details on the box representation. Even with scarce cooperation ($c_{\mathrm{coal}}=10^{-3}$), the performance improvement over noncooperative control is sensible: indices $\eta(\omega)$ and $\psi(\theta)$ are enhanced in Scenario~1 by about 18\% and 31\%, respectively. In Scenario~2, $\eta(\omega)$ is improved by about 18\%; however, power transfers cannot be avoided in this scenario, and the low coordination between areas results in an increase of $\psi(\theta)$ by 5\%.}
\label{fig_etaspsis}
\end{figure}
\begin{figure}
\centering
\subfloat{
\label{fig_coalsiz}
\includegraphics[trim={0 0 0 0},clip,height=\altfigur]{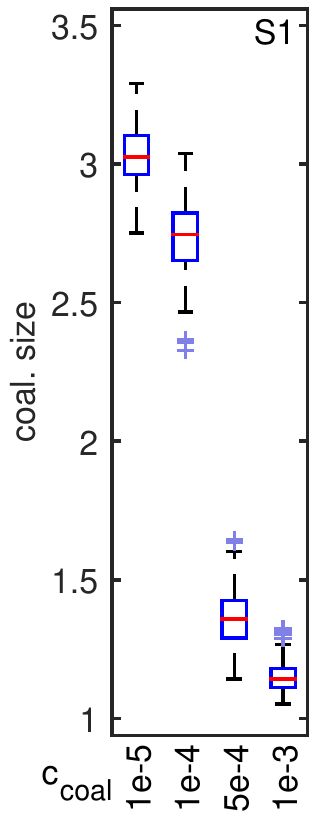}
}\hspace*{-0.7em}%
\subfloat{
\label{fig_coalsiz_lim}
\includegraphics[trim={0 0 0 0},clip,height=\altfigur]{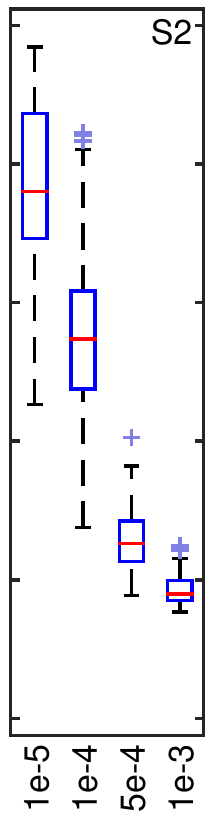}
}\hspace*{3em}
\subfloat{
\label{fig_coaldur}
\includegraphics[trim={0 0 0 0},clip,height=\altfigur]{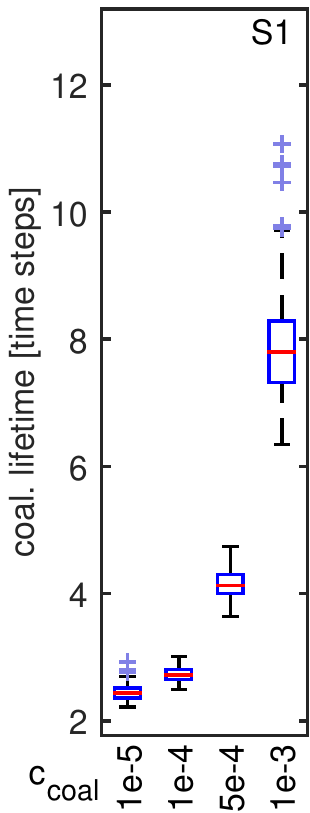}
}\hspace*{-0.7em}%
\subfloat{
\label{fig_coaldur_lim}
\includegraphics[trim={0 0 0 0},clip,height=\altfigur]{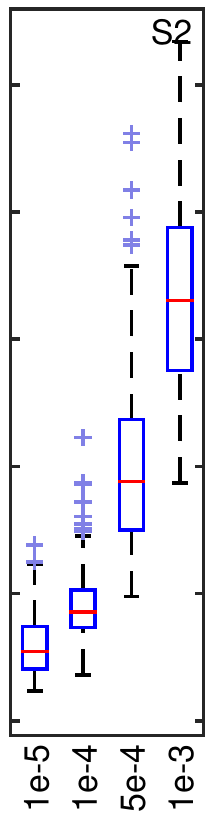}
}%
\caption{Average size of coalitions (left) and coalition lifetimes (right), for different values of $c_{\mathrm{coal}}$ (costs of cooperation are increasing with the coalition size, i.e., $\chi = c_{\mathrm{coal}}|\coal|^2$). Plots marked with S1 are relative to Scenario 1, while S2 refers to the case in which areas 2, 3 and 5 experience limits on the power generation. See Fig.~\ref{fig_TU_S1} for details on the box representation.}
\label{fig_coalsizdurs}
\end{figure}
Figures~\ref{fig_etaspsis} and~\ref{fig_coalsizdurs} gather the results of a set of 200 simulations for the two scenarios, showing the performance for different values of $c_{\mathrm{coal}}$. Coalition formation is disincentivized as $c_{\mathrm{coal}}$ is increased, deteriorating the achievable performance. Roughly speaking, the performances of coalitional control fall between those obtained through fully-cooperative (centralized) and noncooperative MPC control. It is interesting to see how, even with a reduced cooperation effort, the performance improvement over the noncooperative control is sensible: with $c_{\mathrm{coal}}=10^{-3}$ (see top plot in Fig.~\ref{fig_coalevoS1}), yielding an average coalition size of 1.2, indices $\eta(\omega)$ and $\psi(\theta)$ are enhanced by about 18\% and 31\%, respectively. In Scenario 2, $\eta(\omega)$ is improved by about 18\%; however, power transfers cannot be avoided in this scenario, and the low coordination between areas results in an increase of $\psi(\theta)$ by 5\%.\par
Table~\ref{tab_TUalloc} shows the accumulated control costs for each area, in Scenario~1 (cooperation costs are not included). The allocation produced by the proposed iterative utility transfer algorithm is compared to the Shapley value. In order to better evaluate these two outputs, the agents were not allowed to leave the grand coalition in the simulations relative to Table~\ref{tab_TUalloc}. 
The first two columns show the control costs associated to centralized (fully cooperative) and noncooperative MPC: notice how for Area~1 cooperation implies an increase of the local cost. Individual rationality is achieved for all areas with both allocation methods. The results relative to the iterative transfer algorithm have been obtained with 10 iterations, i.e., the dissatisfaction w.r.t. the assigned allocation has been checked for 10 randomly selected subcoalitions (see Section~\ref{sec_bargaining}). Instead, the Shapley value required at each time step the evaluation of all possible subcoalitions, in this case $2^{5} = 32$.\par
Figures~\ref{fig_TU_S1} and~\ref{fig_TU_S2} show the accumulated control costs for the 5 areas, and their corresponding online reallocation, resulting over 200 simulations for the two scenarios. Notice how---particularly in Scenario~1---individual rationality is not always fulfilled when cooperation costs become appreciable. Online reallocation mitigates this issue and provides an incentive for the cooperation (see especially the case of Area~1). 
\begin{figure*}
\centering
\subfloat{
\label{fig_loc_cost1}
\includegraphics[trim={0 0 0 0},clip,height=\altfigur]{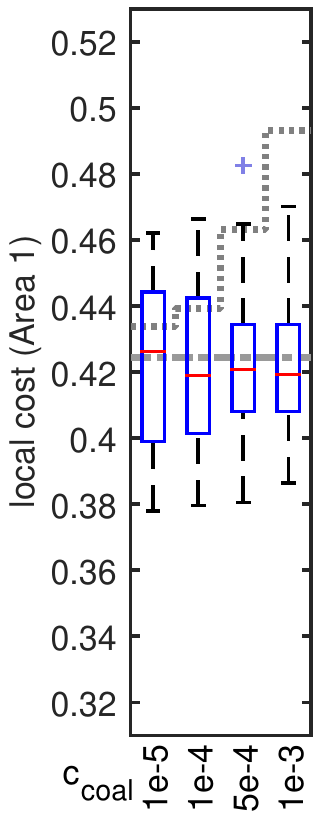}
}\hspace*{-0.7em}%
\subfloat{
\label{fig_loc_costTU1}
\includegraphics[trim={0 0 0 0},clip,height=\altfigur]{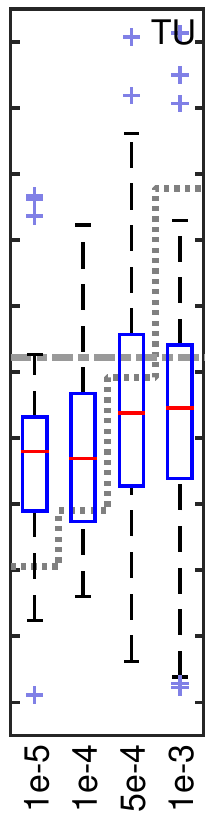}
}\hspace*{-0.3em}
\subfloat{
\label{fig_loc_cost2}
\includegraphics[trim={0 0 0 0},clip,height=\altfigur]{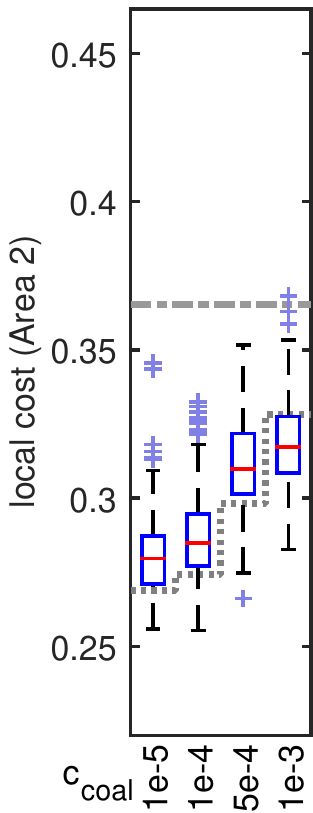}
}\hspace*{-0.7em}%
\subfloat{
\label{fig_loc_costTU2}
\includegraphics[trim={0 0 0 0},clip,height=\altfigur]{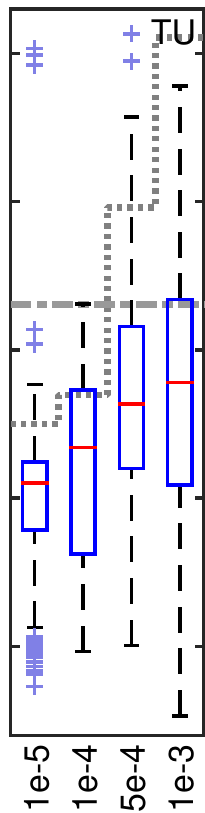}
}\hspace*{-0.3em}
\subfloat{
\label{fig_loc_cost3}
\includegraphics[trim={0 0 0 0},clip,height=\altfigur]{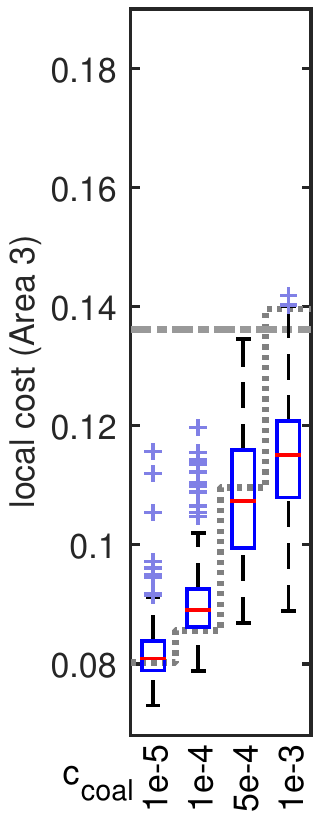}
}\hspace*{-0.7em}%
\subfloat{
\label{fig_loc_costTU3}
\includegraphics[trim={0 0 0 0},clip,height=\altfigur]{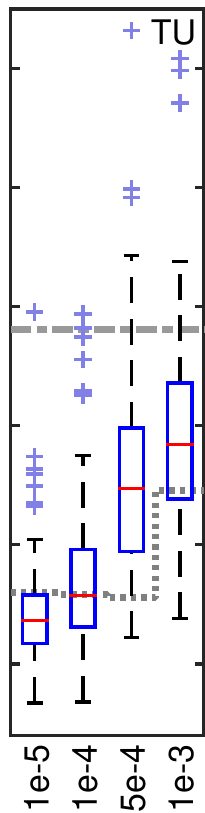}
}\hspace*{-0.3em}
\subfloat{
\label{fig_loc_cost4}
\includegraphics[trim={0 0 0 0},clip,height=\altfigur]{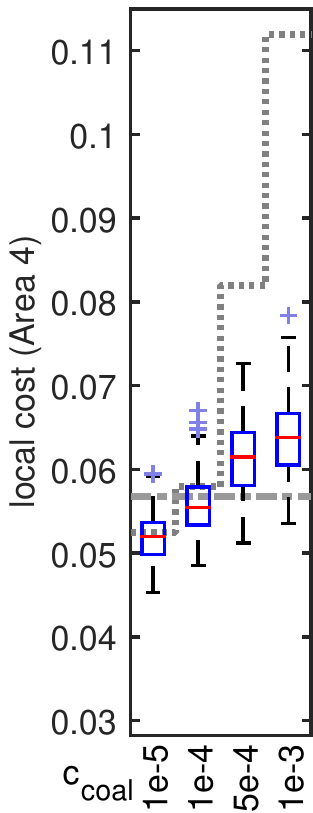}
}\hspace*{-0.7em}%
\subfloat{
\label{fig_loc_costTU4}
\includegraphics[trim={0 0 0 0},clip,height=\altfigur]{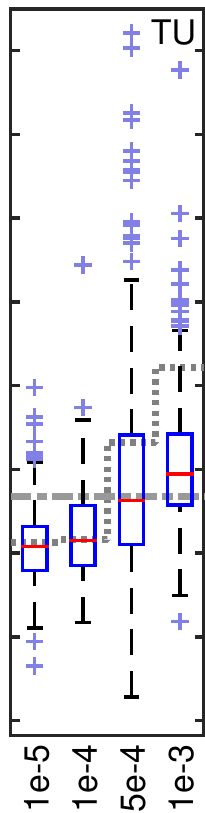}
}\hspace*{-0.3em}
\subfloat{
\label{fig_loc_cost5}
\includegraphics[trim={0 0 0 0},clip,height=\altfigur]{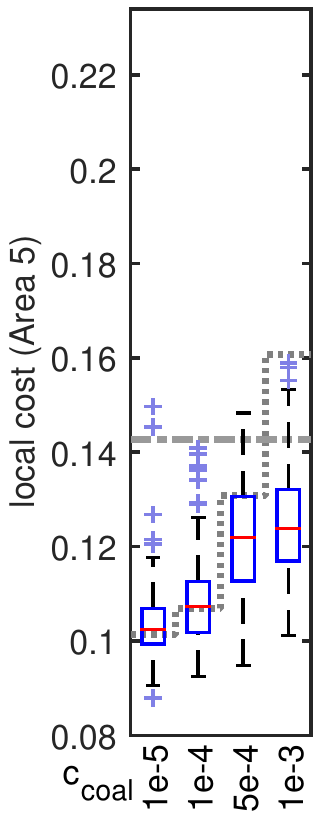}
}\hspace*{-0.7em}%
\subfloat{
\label{fig_loc_costTU5}
\includegraphics[trim={0 0 0 0},clip,height=\altfigur]{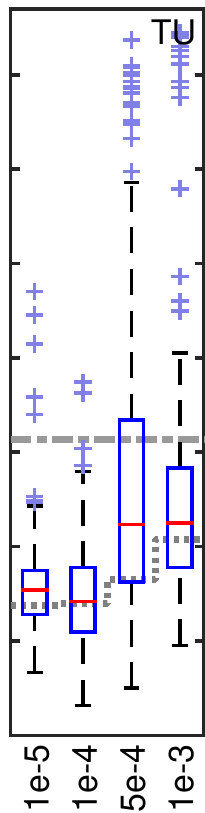}
}%
\caption{Scenario 1: production capacity is sufficient for locally matching any demand. Cost---involving both control and cooperation---locally incurred by agents, accumulated along the simulated interval. Plots marked with TU show the result of the online reallocation with the proposed algorithm. Box plots gather the results of 200 simulations, for different values of $c_{\mathrm{coal}}$ (costs of cooperation are increasing with the coalition size, i.e., $\chi = c_{\mathrm{coal}}|\coal|^2$, for $|\coal|\geq 2$). 
As a reference, the costs corresponding to the fully cooperative strategy are denoted by the dotted line, showing the influence of cooperation costs. These---initially equally supported by the agents---are reallocated online with the proposed algorithm, as shown by the dotted line in `TU' plots. The dashed-dotted line refers to the local cost with the noncooperative strategy. Boxes cover the range between the 25th and the 75th percentiles (the central mark is the median), and outliers (data exceeding a distance from the box extremes of 1.5 times the difference between the 25th and the 75th percentiles) are plotted separately.}
\label{fig_TU_S1}
\subfloat{
\label{fig_loc_cost1_S2}
\includegraphics[trim={0 0 0 0},clip,height=\altfigur]{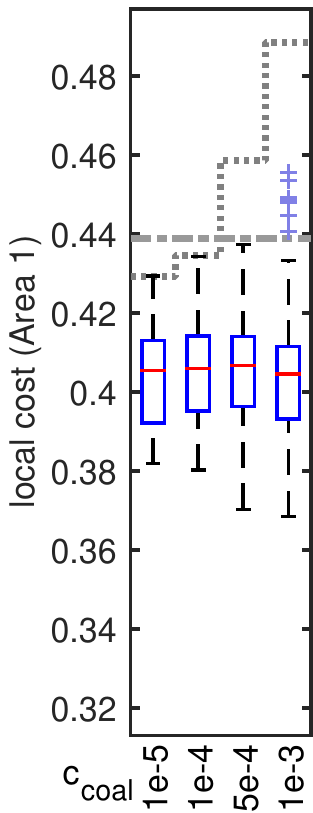}
}\hspace*{-0.7em}%
\subfloat{
\label{fig_loc_costTU1_S2}
\includegraphics[trim={0 0 0 0},clip,height=\altfigur]{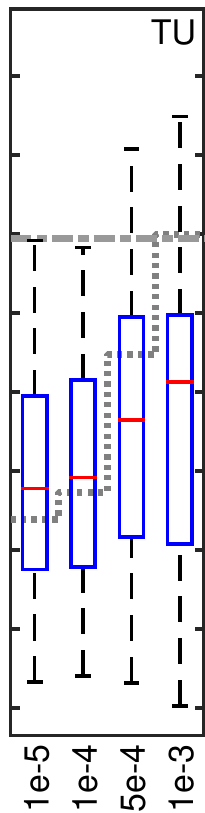}
}\hspace*{-0.3em}
\subfloat{
\label{fig_loc_cost2_S2}
\includegraphics[trim={0 0 0 0},clip,height=\altfigur]{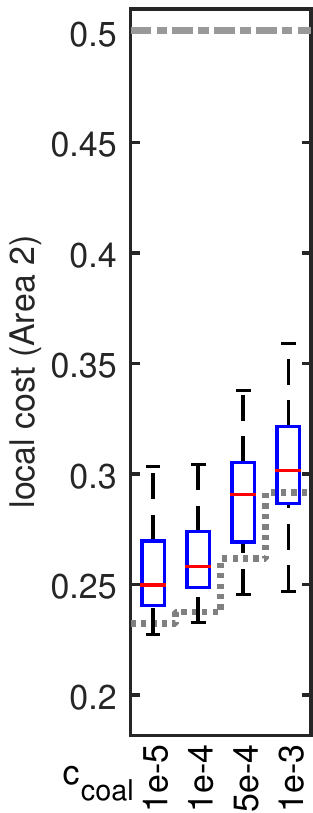}
}\hspace*{-0.7em}%
\subfloat{
\label{fig_loc_costTU2_S2}
\includegraphics[trim={0 0 0 0},clip,height=\altfigur]{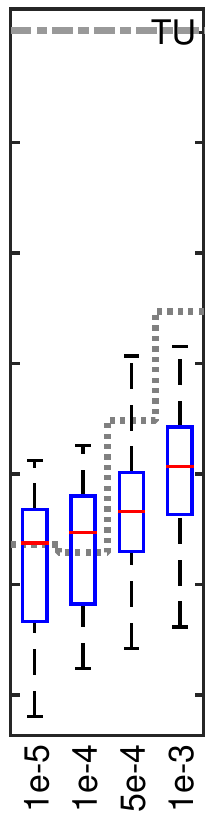}
}\hspace*{-0.3em}
\subfloat{
\label{fig_loc_cost3_S2}
\includegraphics[trim={0 0 0 0},clip,height=\altfigur]{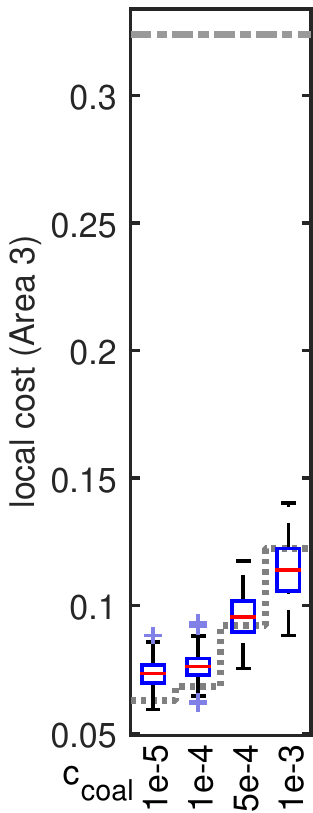}
}\hspace*{-0.7em}%
\subfloat{
\label{fig_loc_costTU3_S2}
\includegraphics[trim={0 0 0 0},clip,height=\altfigur]{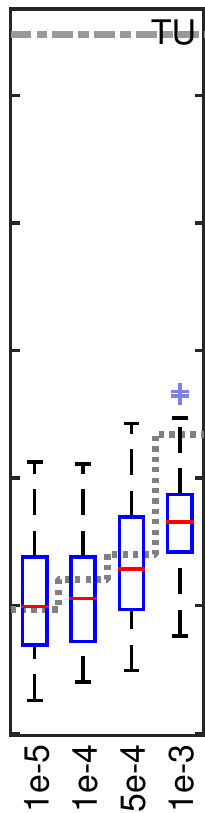}
}\hspace*{-0.3em}
\subfloat{
\label{fig_loc_cost4_S2}
\includegraphics[trim={0 0 0 0},clip,height=\altfigur]{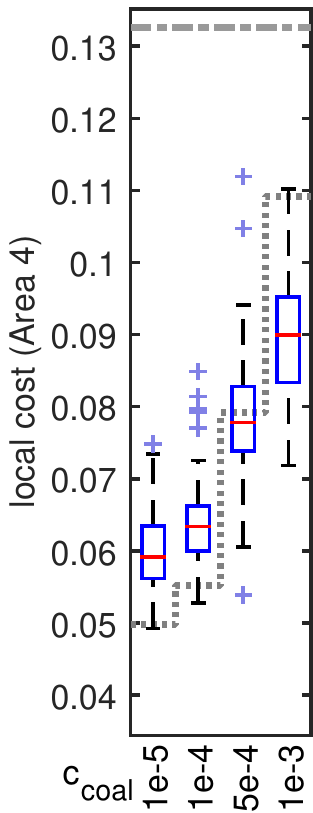}
}\hspace*{-0.7em}%
\subfloat{
\label{fig_loc_costTU4_S2}
\includegraphics[trim={0 0 0 0},clip,height=\altfigur]{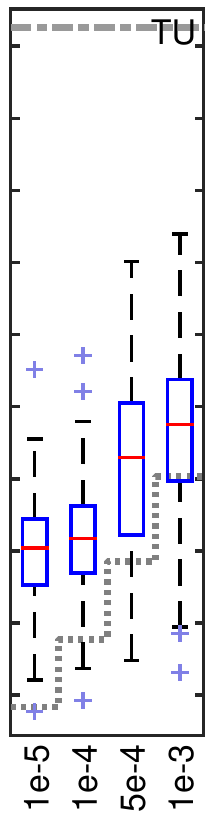}
}\hspace*{-0.3em}
\subfloat{
\label{fig_loc_cost5_S2}
\includegraphics[trim={0 0 0 0},clip,height=\altfigur]{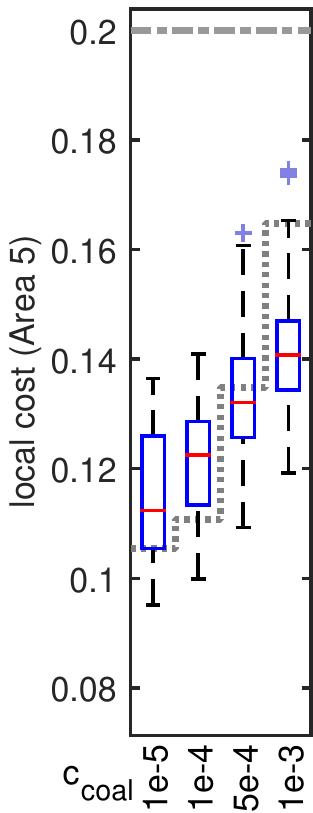}
}\hspace*{-0.7em}%
\subfloat{
\label{fig_loc_costTU5_S2}
\includegraphics[trim={0 0 0 0},clip,height=\altfigur]{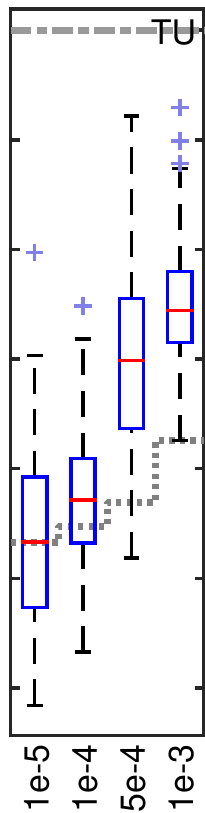}
}%
\caption{Scenario 2: the capacity of local generation is impaired, making energy transfers from neighboring areas necessary for demand satisfaction. Control cost locally incurred by agents, accumulated along the simulated interval. Plots marked with TU show the result of the online cost reallocation with the proposed algorithm. As a reference, the costs corresponding to the fully cooperative strategy are denoted by the dotted line, showing the influence of cooperation costs. These---initially equally supported by the agents---are reallocated online with the proposed algorithm, as shown by the dotted line in `TU' plots. The dashed-dotted line refers to the local cost with the noncooperative strategy. See Fig.~\ref{fig_TU_S1} for details on the box representation.}
\label{fig_TU_S2}
\end{figure*}
\begin{table}
\caption{TU scheme allocation for the grand coalition.}
\label{tab_TUalloc}
\centering
   \begin{tabular}{ c c  c c  c }
	 	& Dec. MPC & Centr. MPC & TU alg. & Shapley \\
		\hline%
    Area 1 & 0.424 & 0.433 & 0.359 & 0.353 \\
    Area 2 & 0.365 & 0.268 & 0.329 & 0.333 \\
		Area 3 & 0.136 & 0.080 & 0.085 & 0.085 \\
		Area 4 & 0.057 & 0.052 & 0.054 & 0.052 \\
		Area 5 & 0.143 & 0.101 & 0.110 & 0.112 
   \end{tabular}
\end{table}

\section{Conclusion and outlook}
\label{sec_concl}
A coalitional control framework based on a switching model predictive control (MPC) architecture for large-scale systems is proposed in this paper. In particular, the framework is directed at systems with heterogeneous character, in which the autonomous components possibly  pursue competing objectives, and possess a limited model of the system. 
By characterizing as a transferable-utility cooperative game the benefit provided to local control agents by a broader feedback and the pursuit of a common objective, the formation of coalitions of controllers is promoted accordingly.
The redistribution of the coalitional benefit is used as incentive for cooperation. Taking into account the informational constraints of the considered setting, a proper allocation of the control cost is derived without the need of a complete knowledge of the game.
The analysis shows that, when global model information is only partially available, cooperation costs play a major role on the outcome of the coalition formation, and that these can be used as a mechanism to link coalition formation with desired closed-loop properties.
Simulation results from a case study of power grid wide-area control show that the reconfiguration capabilities provided to the system through the proposed framework are suited for fault-tolerance needs or plug-and-play settings.\par 
One of the most interesting control challenges arising in the considered setting comes from an informational point of view. The effect of circumscribed information availability on the overall system stability have recently been subject of study~\cite{TanakaEtAl2017ACC,MylvaganamAstolfi2016ACC,DerooEtAl2015}.
Depending on the system dynamics and on the performance requirements, a matter of study could be the design of the terminal ingredients employed in the receding-horizon optimization. A non-conservative design of these elements generally requires information only available at global level, although the actual synthesis can be distributed across the agents~\cite{ConteEtAl2016aut}. We believe this is an interesting topic when privacy concerns need to be taken into account.
Another aspect to be further addressed is the deviation of the actual realization of the cooperation benefit from the expected one, on which the coalitional agreeement and the allocation mechanism are based~\cite{BaeyensEtAl2013}.
Future work might also consider overlapping coalitions, as a means for enhancing the flexibility of cooperation and providing further possibilities for the dynamic reallocation of the agents' control effort.
\section*{Acknowledgment}
The authors would like to thank the anonymous referees and the editor, whose recommendations substantially contributed to improve this work. The authors are extremely grateful to Dr. Antonio Ferramosca and Dr. David Angeli for the enlightening discussions, and to Prof. Daniel Lim\'on and Prof. Teodoro \'Alamo for their constant support. 

\bibliographystyle{ieeetr}
\bibliography{biblio_tesi}

\end{document}